\theoremstyle{plain}
\newtheorem{theorem}{Theorem}[section]
\newtheorem{lemma}[theorem]{Lemma}
\newtheorem{corollary}[theorem]{Corollary}
\newtheorem{claim}[theorem]{Claim}
\theoremstyle{definition}
\newtheorem{definition}[theorem]{Definition}
\newtheorem{remark}[theorem]{Remark}
\newcommand{\abs}[1]{\left|#1\right|}
\def\E{\mathop{\mathbb E}}
\newcommand{\Eparen}[1]{\E \left [#1 \right ]}
\newcommand{\N}{\mathbb N}
\newcommand{\R}{\mathbb R}
\newcommand{\B}{\{ 0,1 \}}
\newcommand{\HH}{\mathcal{H}}
\newcommand{\BAD}{\mathsf{BAD}}
\newcommand{\err}{\textsf{err}}
\newcommand{\poly}{\textsf{poly}}
\newcommand{\leps}{\log{\frac{1}{\varepsilon}}}
\newcommand{\paren}[1]{\left(#1\right)}
\newcommand{\logchoose}[2]{\log{ {#1 \choose #2}}}
\newcommand{\logp}[1]{\log{\paren{#1}}}
\newcommand{\eps}{\varepsilon}
\newcommand{\till}{,\ldots,}
\newcommand{\ie}  {i.e.,\ }
\newcommand{\eg}  {e.g.,\ }
\newcommand{\etal}{{et~al.\ }}
\newcommand{\etalcite}[1]{{et~al.~\cite{#1}}}
\newcommand{\negl}{{\mathsf{neg}}}
\newcommand{\ith}[1]{{#1}\textsuperscript{th}}
\newcommand{\PRF}{\textsf{PRF}}
\newcommand{\PRP}{\textsf{PRP}}
\newcommand{\Algorithm}[1]{\textsf{#1}}
\newcommand{\Predict}{\Algorithm{Attack}}
\newcommand{\PredictUnsteady}{\Algorithm{Attack}}
\newcommand{\BB}{{\mathbf{B}}}
\newcommand{\M}{{M}}
\newcommand{\Challenge}{{\sf Challenge}}
\newcommand{\myand}{{\;\;\wedge\;\;}}
\newcommand{\Xc}{\mathcal{X}}
\title{
Bloom Filters in Adversarial Environments}
\author{
 Moni Naor\thanks{Weizmann Institute of Science. Email:
    \texttt{moni.naor@weizmann.ac.il}.
    Supported in part by a grant from the I-CORE Program of the
    Planning and Budgeting Committee, the Israel Science Foundation, BSF and the Israeli Ministry of Science and Technology. Incumbent of the Judith Kleeman Professorial Chair.}
\and
Eylon Yogev\thanks{Weizmann Institute of Science. Email:
    \texttt{eylon.yogev@weizmann.ac.il}.
    Supported in part by a grant from the I-CORE Program of the
    Planning and Budgeting Committee, the Israel Science Foundation, BSF and the Israeli Ministry of Science and Technology.
     }}
\date{}
\begin{document}

\maketitle

\begin{abstract}
Many efficient data structures use randomness, allowing them to 
improve upon deterministic ones. Usually, their efficiency and
correctness are analyzed using probabilistic tools under the assumption that 
the inputs and queries are \emph{independent} of the internal randomness of 
the data structure. In this work, we consider data structures in a more robust 
model, which we call the \emph{adversarial model}. Roughly speaking, this 
model allows an adversary to choose inputs and queries \emph{adaptively} 
according to previous responses. Specifically, we consider a data structure 
known as ``Bloom filter'' and prove a tight connection between Bloom filters 
in this model and cryptography.

A Bloom filter represents a set $S$ of elements approximately, by using fewer
bits than a precise representation.
The price for succinctness is allowing some errors: for any $x \in S$ it
should always
answer `Yes', and for any $x \notin S$ it should answer `Yes' only with
small probability.

In the adversarial model, we consider both efficient adversaries (that run in 
polynomial time) and computationally unbounded adversaries that are only 
bounded in the number of queries they can make. For computationally bounded 
adversaries, we show that non-trivial (memory-wise) Bloom filters exist if and 
only if one-way functions exist. For unbounded adversaries we show that there 
exists a Bloom filter for sets of size $n$ and error $\eps$, that is secure 
against $t$ queries and uses only $O(n \leps+t)$ bits of memory. In 
comparison, $n\leps$ is the best possible under a non-adaptive adversary.
\end{abstract}

\newpage
\tableofcontents
\newpage

\section{Introduction}
Data structures are one of the most fundamental objects in Computer Science. 
They
provide means to organize a large amount of data such that it can be
queried efficiently. In general, constructing efficient data structures is key 
to designing efficient algorithms. Many efficient data structures use 
randomness, a resource that allows them to 
bypass lower bounds on deterministic ones. In these cases, their efficiency 
and correctness are analyzed in expectation or with high probability.

To analyze randomized data structures, one must first define the underlying 
model of the analysis. Usually, the model assumes that 
the inputs (equivalently, the queries) are \emph{independent} of the internal 
randomness of 
the data structure. That is, the analysis is of the form: For any
sequence of inputs, with high probability (or expectation) over its internal
randomness, the data structure will yield a correct answer. This model is
reasonable in a situation where the adversary picking the inputs gets no
information about the internal state of the data structure or about the random 
bits actually used (in particular, the 
adversary does 
not get the responses on previous inputs).\footnote{This does not include Las 
Vegas type data structures, where the output is always correct, and the 
randomness only affects the running time.}

In this work, we consider data structures in a more robust model, which we 
call the \emph{adversarial model}. Roughly speaking, this 
model allows an adversary to choose inputs and queries \emph{adaptively} 
according to previous responses. That is, 
the analysis is of the form: With high probability over 
the internal randomness of the data structure, for any adversary adaptively 
choosing a sequence of inputs, the response to a single query will be correct. 
Specifically, we consider a data structure known as ``Bloom filter'' 
and prove a tight connection between Bloom filters in this model and 
cryptography: We show that Bloom filters in an adversarial model exist if and 
only if one-way functions exist.

\paragraph{Bloom Filters in Adversarial Environments.}
The approximate set membership problem deals with succinct representations of 
a set $S$ of elements from a large universe $U$, where the price for 
succinctness is allowing some errors. A data structure
solving this problem is required to answer queries in the following manner: 
for any $x \in S$ it should always
answer `Yes', and for any $x \notin S$ it should answer `Yes' only with
small probability. False responses for $x \notin S$ are called \emph{false 
positive} errors.

The study of the approximate set membership problem began with
Bloom's 1970 paper \cite{Bloom70}, introducing the so-called ``Bloom 
filter'',
which provided a simple and elegant solution to the problem. (The term ``Bloom
filter" may refer to Bloom's original construction, but we use it to
denote any construction solving the problem.) The two major
advantages of Bloom filters are: (i) they use significantly less memory (as
opposed to storing $S$ precisely) and (ii) they have very fast query time (even
constant query time). Over the years, Bloom filters have been found to be
extremely useful and practical in various areas. Some primary examples are
distributed systems \cite{ZhuJW04}, networking \cite{DharmapurikarKSL04},
databases~\cite{Mullin90}, spam filtering~\cite{YanC06,LiZ06}, web 
caching~\cite{FanCAB00}, streaming algorithms~\cite{NaorY15,DengR06} and 
security~\cite{ManberW94,ZhangG08}. For a survey about Bloom
filters and their applications see~\cite{BroderM03} and  a 
more recent one~\cite{TarkomaRL12}.

Following Bloom's original construction many generalizations and variants have 
been proposed and extensively
analyzed, providing better tradeoffs between memory consumption, error 
probability and 
running time, see e.g.,~\cite{ChazelleKRT04,PutzeSS09,PaghPR05,ArbitmanNS10}. 
However, as 
discussed, all known constructions of
Bloom filters work under the assumption that the input query $x$ is fixed, and 
then the probability of an error occurs over the randomness of the 
construction.
Consider the case where the query results are made public. What happens
if an adversary chooses the next query according to the responses of 
previous ones? Does the bound on the error probability still hold? The 
traditional analysis of Bloom filters is no longer sufficient, and stronger 
techniques are required.

Let us demonstrate this need with a concrete scenario. Consider a system where 
a Bloom 
filter representing a {\em white
list} of email addresses is used to filter spam mail. When an email message is
received, the sender's address is checked against the Bloom filter, and if the
result is negative, it is marked as spam. Addresses not on the white list have 
only a small probability of being a false positive and thus not marked as 
spam. In this case, the results of the queries are public, as an attacker 
might check whether his emails are marked as spam (\eg spam his personal email 
account and see if the messages are being 
filtered). In this case, each query translates to opening a new email account, 
which might be costly. Moreover, an email address can be easily blocked once 
abused. Thus, the goal of an attacker is to find a large bulk of email 
addresses using a small number of queries. Indeed, the attacker, after a short 
sequence of queries, might be able to find 
a {\em large} bulk of email addresses (much larger than the number of queries) 
that are not marked as spam although they are not in the white list. Thus, 
bypassing the security of the system and flooding users with spam mail.

As another example application, Bloom filters are often used for holding the 
contents 
of a cache. For instance, a web proxy holds, on a (slow) disk, a cache 
of locally available web pages. To improve performance, it maintains in (fast) 
memory a Bloom filter representing all addresses in the cache. 
When a user queries for a web page, the proxy first checks the Bloom filter to 
see if the page is available in the cache, and only then does it
search for the web page on the disk. A false positive is translates to  
unsuccessful cache access, that is, a slow disk lookup. In the 
standard analysis, one would set the error to be small such that cache misses 
happen very rarely (\eg 
one in a thousand requests). However, by timing the results of the proxy, an 
adversary might learn the responses of the Bloom filter, enabling her to find 
false positives and cause 
an unsuccessful cache access for almost every query and, eventually, causing a 
Denial of 
Service (DoS) attack. The adversary cannot use a false 
positive more than once, as after each unsuccessful cache access the web page 
is added to the cache.
These types of attacks are applicable in many different 
systems where Bloom filters are integrated (\eg \cite{PappasKVKMCGKB14}).

Under the adversarial model, we construct Bloom filters that are resilient to 
the above attacks. We consider 
both efficient adversaries (that 
run in polynomial time) and computationally unbounded adversaries that are 
only bounded in the number of queries they can make. We define a Bloom filter 
that maintains its error probability in this setting and say it is 
\emph{adversarial resilient} (or just resilient for shorthand).

The security of an adversarial resilient Bloom filter is defined in terms of a 
game (or an experiment) with an
adversary. The adversary is allowed to choose the set $S$, make a sequence of 
$t$ 
adaptive queries
to the Bloom filter and get its responses. Note that the adversary has only
oracle access to the Bloom filter and cannot see its internal memory
representation. Finally, the adversary must output
an element $x^*$ (that was not queried before) which she believes is a false
positive. We say that a Bloom filter is $(n,t,\eps)$-adversarial resilient
if when initialized over sets of size $n$ then after $t$ queries the
probability of $x^*$ being a false positive is at most $\eps$. If a Bloom 
filter is resilient $t$ queries, for any $t$ that is bounded by a polynomial in 
$n$ we 
say it is \emph{strongly resilient}.

A simple construction of a strongly resilient Bloom filter (even against
computationally unbounded adversaries) can be achieved by storing $S$
precisely. Then, there are no false positives at all and no adversary can find
one. The drawback of this solution is that it requires a large amount of
memory, whereas Bloom filters aim to reduce the memory usage. We are interested
in Bloom filters that use a small amount of memory but remain nevertheless,
resilient.

\subsection{Our Results}
We introduce the notion of {\em adversarial-resilient
Bloom filter} and show several possibility results (constructions of
resilient Bloom filters) and impossibility results (attacks against any
Bloom filter) in this context. The precise definitions and the model we consider are given in Section \ref{sec:model}.

\paragraph{Lower bounds.}
Our first result is that adversarial-resilient Bloom filters
against computationally bounded adversaries that are non-trivial (\ie they 
require less space than the amount of space it takes to store the elements 
explicitly) must use one-way functions. That is, we show that if one-way 
functions do not exist then any Bloom filter can be `attacked' with high 
probability.

\begin{theorem}[Informal]\label{thm:informal1}
Let $\BB$ be a \emph{non-trivial} Bloom filter. If $\BB$ is strongly
resilient against computationally bounded adversaries, then one-way functions
exist.
\end{theorem}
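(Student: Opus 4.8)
The plan is to prove the contrapositive: if one-way functions do not exist, then no non-trivial Bloom filter $\BB=(\Offline,\Query)$ can be strongly resilient against polynomial-time adversaries. The attack I would build rests on three ingredients. First, a structural consequence of non-triviality: writing $Y_M=\{x\in U:\Query(M,x)=1\}$ and letting $m$ denote the memory length, completeness forces $S\subseteq Y_M$ whenever $M$ lies in the image of $\Offline(S;\cdot)$, so a counting argument comparing the $2^m$ possible memory images with the $\binom{|U|}{n}$ sets of size $n$ yields a set $S$ for which \emph{every} reachable memory $M$ has $|Y_M\setminus S|$ large --- the larger, the more $m$ falls short of $\log\binom{|U|}{n}$. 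The adversary fixes this $S$, since resilience is required against every set. Second, the Impagliazzo--Luby theorem: because one-way functions do not exist, every polynomial-time computable function admits a polynomial-time \emph{distributional inverter} that, on $f(x)$ for uniform $x$, returns a sample within inverse-polynomial statistical distance of the uniform distribution on $f^{-1}(f(x))$. Third, I would use this to perform Bayesian posterior sampling of the memory: after the adversary has asked queries $q_1,\dots,q_i$ and received answers $a_1,\dots,a_i$, the map $r\mapsto(\Query(\Offline(S;r),q_1),\dots,\Query(\Offline(S;r),q_i))$ is efficiently computable (the $q_j$ being fixed values by this point), so inverting it on the observed answers and applying $\Offline(S;\cdot)$ produces a memory $M'$ distributed (up to inverse-polynomial error) as the true memory $M$ conditioned on the transcript; a further inversion lets the adversary also sample a uniformly random not-yet-queried false positive of such an $M'$.

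The attack itself would be a greedy ``learn the memory'' loop. At each round the adversary draws posterior samples and estimates, for unqueried elements $z\notin S$, the posterior probability $p_z$ that a consistent memory answers Yes on $z$. If some such $z$ has $p_z$ close to $1$, the adversary halts and outputs $x^*=z$: since the true $M$ is itself distributed as the posterior, $z$ is a false positive with probability $p_z>\eps$. Otherwise, if some such $z$ has $p_z$ bounded away from $0$ and $1$ by an inverse-polynomial margin, the adversary queries it; conditioning on the answer decreases the Shannon entropy of the posterior over memories by the binary entropy $H(p_z)$ in expectation, which is at least inverse-polynomial, and since that entropy starts below $m$ and never goes negative, this can happen only polynomially many times --- after which the posterior is concentrated on essentially a single memory, whose false-positive set exceeds the number of queries made and hence contains a fresh false positive to output. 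In the remaining case every unqueried $z\notin S$ has tiny $p_z$; here the adversary would query a uniformly random not-yet-queried false positive of a fresh posterior sample, aiming to shrink the set of memories --- or the ``support'' $\bigcup_{M\text{ consistent}}Y_M$ --- still consistent with the transcript.

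I expect the main obstacle to be exactly that last case: bounding the number of rounds when the posterior over memories is so diffuse that no single element belongs to a noticeable fraction of the consistent Yes-sets, so no query is plainly informative and no element can be output with confidence. Handling this is where the \emph{quantitative} form of non-triviality has to be played against soundness: a genuine Bloom filter satisfies $\Pr_r[x\in Y_{\Offline(S;r)}]\le\eps$ for every fixed $x\notin S$, which limits how many ``witnessed'' memories can coexist before some element is a likely false positive, while non-triviality forces all reachable Yes-sets to be large; reconciling the two --- essentially arguing that a complete, sound family of large Yes-sets cannot avoid an informative query for more than polynomially many rounds, so that querying random false positives eventually forces one of the two good cases --- is the technical heart. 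A secondary difficulty is error control: the distributional inverter is only inverse-polynomially accurate and is invoked polynomially many times, each round conditioning on its earlier outputs, so its accuracy parameter must be tuned so the accumulated statistical drift stays below the margin by which the final success probability exceeds $\eps$.
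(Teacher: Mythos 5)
Your proposal has a genuine gap exactly at the step you flag as ``the technical heart,'' and the paper's proof works precisely by sidestepping it. The paper's steady-case attack makes $t = O(m/\eps_0)$ \emph{uniformly random} queries up front (non-adaptively), defines the function $f(S,r,x_1,\ldots,x_t) = (x_1,\ldots,x_t,\BB_2(\M^S_r,x_1),\ldots,\BB_2(\M^S_r,x_t))$, and inverts $f$ (a plain weak-OWF inverter suffices here; the Impagliazzo--Luby distributional inverter is not needed in the steady case). The key observation is Occam's Razor: with at most $2^m$ candidate memories, the probability that some $\M'$ consistent with the $t$ random answers disagrees with $\M$ on more than an $\eps_0/100$ fraction of the universe is at most $2^m(1-\eps_0/100)^t \le 1/100$ once $t = 200m/\eps_0$, so any consistent hypothesis the inverter returns already approximates $\M$ well, and a random positive of $\M'$ is a false positive of $\M$ with good probability. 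No posterior concentration, entropy accounting, or adaptive query selection is required. Your loop, by contrast, has no bound on the number of rounds spent in the ``remaining case'' where every unqueried $z$ has tiny $p_z$: the entropy argument only controls rounds with an informative query, and nothing forces the posterior to ever present one. You correctly identify this as unresolved; the fix is to notice it need not be resolved, because you never need the posterior to concentrate --- you only need \emph{one} consistent hypothesis, and Occam's Razor already makes that hypothesis accurate.

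A second gap is scope: the theorem must also cover Bloom filters with \emph{unsteady} representations, where $\BB_2$ is randomized and mutates the memory on each query through the interface $Q(\cdot)$. In that model the map $r \mapsto (\Query(\Offline(S;r),q_1),\ldots)$ that you want to invert is not even well-defined as a deterministic function of the initial coins, and the paper needs the full machinery of adaptively changing distributions (Naor--Rothblum) to learn the evolving state --- a qualitatively heavier argument that your proposal does not touch. (The error-accumulation concern you raise at the end is real, but it is exactly what the ACD framework is built to control in the unsteady case; in the steady case it is a non-issue because the inverter is called exactly once.) Your opening counting argument is in the right spirit and corresponds to the paper's Claim~\ref{clm:many_positives}, though the paper argues over a \emph{random} $S$ via an encoding argument rather than exhibiting a particular bad $S$; either form would serve.
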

Actually, we show a trade-off between the amount of memory used by 
the Bloom filter and the number of queries performed by the adversary. Carter 
\etalcite{CarterFGMW78} proved a lower 
bound on the amount of memory required by a Bloom filter. To construct a Bloom 
filter for sets of size $n$ and error rate $\eps$ one must use (roughly) 
$n\leps$ bits of memory (and this is tight). Given a Bloom filter that uses 
$m$ bits of memory we get a lower bound for its error rate $\eps$ and thus a 
lower bound for the (expected) number of false positives. The smaller $m$ is, 
the larger the number of false positives is, and we prove that the adversary 
can perform fewer queries.

Bloom filters consist of two algorithms: an initialization algorithm that gets 
a set and outputs a compressed representation of the set, and a membership 
query algorithm that gets a representation and an input. Usually, Bloom filters 
have a {\em randomized} initialization algorithm but a {\em deterministic} 
query algorithm that does not change the representation. We say that such Bloom 
filters have a ``steady representation''. However, in some cases, a randomized 
query algorithm can make the Bloom filter more powerful (see 
\cite{ErlingssonPK14} for such an example). Specifically, it might incorporate 
differentially private~\cite{DworkMNS06} algorithms in order to 
protect the internal memory from leaking. Differentially private algorithms are 
designed to protect a private database against adversarial and also adaptive 
queries from a data analyst. One might hope that such techniques can eliminate 
the need for one-way functions in order to construct resilient Bloom filters. 
Therefore, we consider also Bloom filters with ``unsteady representation'': 
where the query algorithm is randomized {\em and} can change the underlying 
representation on each query. We extend our results (\Cref{thm:informal1}) 
to handle Bloom filters with unsteady representations, which proves that any 
such approach 
cannot gain additional security. The proof of the theorem (both the steady and 
unsteady case) appears in Section \ref{sec:essential}.

\paragraph{Constructions.}
In the other direction, we show that using one-way functions one
can construct a strongly resilient Bloom filter. Actually, we show that
one can transform any Bloom filter to be strongly resilient with almost exactly
the same memory requirements and at a cost of a single evaluation of a 
pseudorandom permutation\footnote{A pseudorandom permutation is family
of functions that a random function from the family cannot be distinguished 
from a truly random permutation by any 
polynomially bounded adversary making queries to the function. It models a 
block cipher (See Definition 
\ref{def:pseudorandomP}).} (which can be constructed using one-way 
functions). Specifically, in Section \ref{sec:sufficient} we show:

\begin{theorem}\label{thm:informal2}
	Let $\BB$ be an $(n,\eps)$-Bloom filter using $m$ bits of memory. If
	pseudorandom permutations exist, then for security parameter 
	$\lambda$ there exists a negligible function\footnote{A 
		function $\negl\colon\N\to\R^+$ is \emph{negligible} if for every 
		constant 
		$c > 0$, there exists an integer $N_c$ such that $\negl(n) < n^{-c}$ 
		for 
		all $n > N_c$.} $\negl(\cdot)$ and an 
	$(n,\eps+\negl(\lambda))$-strongly resilient Bloom filter that uses 
	$m'=m+\lambda$ bits of memory.
\end{theorem}

In practice, Bloom filters are used when performance is crucial, and extremely 
fast implementations are required. This raises implementation difficulties 
since cryptographically secure functions rely on relatively heavy computation. 
Nevertheless, we provide an implementation of an adversarial resilient Bloom 
filter that is provably secure under the hardness of AES and is essentially 
as fast as any other implementation of insecure Bloom filters. Our 
implementation 
exploits the AES-NI\footnote{Advanced Encryption Standard Instruction Set.} 
instruction set  that is embedded in most modern CPUs and provides hardware 
acceleration of the AES encryption and decryption algorithms \cite{Gueron2009}. 
See Appendix \ref{sec:implementation} for more details.

In the context of unbounded adversaries, we show a positive result. For a set
of size $n$ and an error probability of $\eps$ most constructions use about
$O(n\leps)$ bits of memory. We construct a resilient Bloom
filter that does not use
one-way functions, is resilient against $t$ queries, uses
$O(n\leps + t)$ bits of memory, and has query time $O(\leps)$.

\begin{theorem}\label{thm:informal3}
There exists an $(n,t,\eps)$-resilient
Bloom filter (against unbounded adversaries) for any $n,t \in \N$, and $\eps>0$ 
that uses $O(n\leps + t)$ bits of memory and has linear setup time and $O(1)$ 
worst-case query time.
\end{theorem}

\subsection{Related Work}
One of the first works to consider an adaptive adversary that chooses queries 
based on the response of the data structure is by Lipton and 
Naughton~\cite{LiptonN93}, where
adversaries that can measure the {\em time} of specific operations in a dictionary were addressed. They
showed how such adversaries can be used to attack hash tables. Hash tables 
have some method for dealing with collisions. An adversary that can measure 
the 
time of an insert query can determine whether there was a collision and might 
figure out the precise hash function used. She can then choose the next 
elements to insert accordingly, increasing the probability of a collision and 
hurting the  overall performance.

Mironov~\etalcite{MironovNS11} considered the model of {\em sketching} in an 
adversarial environment. The model consists of several honest parties that are 
interested in computing a joint function in
the presence of an adversary. The adversary chooses the inputs of the honest 
parties based on the shared random string. These inputs are provided to
the parties in an on-line manner, and each party incrementally updates a compressed sketch of
its input. The parties are not allowed to communicate, they do not share any secret information,
and any public information they share is known to the adversary in advance. Then, 
the parties engage in a protocol in order to evaluate the function on their current 
inputs using only the compressed sketches. Mironov \etal construct explicit and 
efficient (optimal) protocols for two fundamental problems: testing equality of two 
data sets and approximating the size of their symmetric difference.

In a more recent work, Hardt and Woodruff \cite{HardtW13} considered linear
sketch algorithms in a similar setting. They consider an adversary that can
adaptively choose the inputs according to previous evaluations of the sketch.
They ask whether linear sketches can be robust to adaptively chosen inputs.
Their results are negative: They showed that no linear sketch approximates the
Euclidean norm of its input to within an arbitrary multiplicative
approximation factor on a polynomial number of adaptively chosen inputs.

One may consider adversarial resilient Bloom filters in the framework of 
computational learning theory. The task of the adversary is to
learn the private memory of the Bloom filter in the sense that it is able to predict on which elements the Bloom filter outputs a false positive.
The connection between learning and cryptographic assumptions has been 
explored before (already in his 1984 paper introducing the PAC model Valiant's 
observed that the nascent pseudorandom functions imply hardness of 
learning~\cite{Valiant84}). In particular, Blum \etalcite{BlumFKL93}
showed how to construct several cryptographic primitives (pseudorandom bit generators, one-way functions and private-key
cryptosystems) based on certain assumptions on the difficulty of learning. The 
necessity of one-way functions for several cryptographic primitives has been 
shown in \cite{ImpagliazzoL89}.

\section{Model and Problem Definitions}\label{sec:model}
In our model, we are given a universe $U = [u]$ of elements, and a subset $S 
\subset U$ of size $n$. For simplicity of presentation, we consider mostly the 
static 
problem, where the set is fixed throughout the lifetime of the data structure. In the dynamic setting, the Bloom filter is initially empty, and the user can add elements to the set in between queries. We note that the lower bounds imply the same bounds for the dynamic case and the cryptographic 
upper bound (\Cref{thm:no-owf}) actually works in the dynamic case as well.

A Bloom filter is a data structure $\BB=(\BB_1,\BB_2)$ composed of a setup 
algorithm $\BB_1$ (or ``build'') and a query algorithm $\BB_2$ (or ``query'').
The setup algorithm $\BB_1$ is
randomized, gets as input a set $S$, and outputs $\BB_1(S)=\M$ which is a 
compressed representation of the set $S$. To denote the representation $\M$ on 
a set $S$ with random
string $r$ we write $\BB_1(S;r)=M^S_r$; its size in bits is denoted as
$|\M^S_r|$.

The query algorithm answers membership queries to $S$ given the compressed
representation $\M$. That is, it gets an input $x$ from $U$ and
answers with 0 or 1. (The idea is that the answer is 1 only if $x \in S$, but
there may be errors.)
 Usually, in the literature, the query algorithm is
deterministic and cannot change the representation. In this case, we say $\BB$
has a \emph{steady representation}.
However, we also consider Bloom filters where their query algorithm is
\emph{randomized} and can change the representation $\M$ after each query. In
this case, we say that
$\BB$ has an \emph{unsteady representation}. We define both variants.

\begin{definition}[Steady-representation Bloom filter]\label{def:steady}
Let $\BB=(\BB_1,\BB_2)$ be a pair of polynomial-time algorithms where $\BB_1$ 
is a randomized algorithm that gets as input a set $S$ and outputs a 
representation, and $\BB_2$ is a deterministic algorithm that gets as input a 
representation and a query element $x \in U$. We say that $\BB$ is an 
$(n,\eps)$-Bloom filter (with a steady
representation) if for all sets $S$ of size $n$ in a suitable universe $U$ it 
holds that:
\begin{enumerate}
\item Completeness: For any $x \in S$: $\Pr[\BB_2(\BB_1(S),x) = 1] = 1$
\item Soundness: For any $x \notin S$: $\Pr[\BB_2(\BB_1(S),x) = 1] \le \eps$,
\end{enumerate}
where the probabilities are over the setup algorithm $\BB_1$.
\end{definition}

\paragraph{False Positive and Error Rate.} Given a representation $\M$ of 
$S$, if $x \notin S$ and $\BB_2(M,x)=1$ we say that $x$ is a \emph{false 
positive} with respect to $\M$. Moreover, if $\BB$ is an $(n,\eps)$-Bloom 
filter then we say that 
$\BB$ has \emph{error rate} at most $\eps$.

Definition \ref{def:steady} considers only a single fixed input
$x$ and the probability is taken over the randomness of $\BB$. We want to give
a stronger soundness requirement that considers a sequence of inputs
$x_1,x_2,\ldots,x_t$ that is not fixed but chosen by an adversary, where the
adversary gets the responses of previous queries and can adaptively choose the
next query accordingly. If the adversary's probability of finding a false
positive $x^*$ that was not queried before is bounded by $\eps$, then we say
that $\BB$ is an
$(n,t,\eps)$-resilient Bloom filter. This notion is defined in the challenge
$\Challenge_{A,t}$ which is described below. In this challenge, the 
polynomial-time adversary $A=(A_1,A_2)$ consists of two parts: $A_1$ is 
responsible for choosing a set $S$. 
Then, $A_2$ get $S$ as input, and its goal is to find a false positive $x^*$, 
given only oracle access to a Bloom filter initialized with $S$. The adversary 
$A$ succeeds if $x^*$ is not among the queried elements and is a false 
positive. We measure the success probability of $A$ with respect to the 
randomness in $\BB_1$ and in $A$.
 
Note that in this case, the
setup phase of the Bloom filter and the adversary get an additional input, 
which is the number $\lambda$ (given in unary as $1^\lambda$). This is called 
the {\em security parameter}. Intuitively, this is like the length of a 
password, that is, as it increases the security gets stronger. More formally, 
it enables the running time of the Bloom filter to be polynomial in $\lambda$ 
and thus the error $\eps=\eps(\lambda)$ can be a function of $\lambda$. 
From here on, we always assume that $\BB$ has this format, however, sometimes 
we omit this additional parameter from the writing when clear from the context.
For a steady representation Bloom filter
we define:

\begin{definition}[Adversarial-resilient Bloom filter with a steady
representation]\label{def:adversarialsteady}
Let $\BB=(\BB_1,\BB_2)$ be an $(n,\eps)$-Bloom filter with a steady 
representation (see Definition \ref{def:steady}). We say that $\BB$ is an 
$(n,t,\eps)$-adversarial resilient Bloom
filter (with a steady representation) if for any probabilistic polynomial-time 
adversary $A=(A_1,A_2)$ for all large enough $\lambda \in \N$ it 
holds that:
\begin{itemize}
	\item Adversarial Resilient:
	$\Pr[\Challenge_{A,t}(\lambda)=1] \le \eps$,
\end{itemize}
where the probabilities are taken over the internal randomness of $\BB_1$ and 
$A$. The random variable $\Challenge_{A,t}(\lambda)$ 
is the outcome of the following algorithm (see also Figure \ref{fig:def}):
\paragraph{\uline{$\Challenge_{A,t}(\lambda)$:}}
\begin{enumerate}
\item $S \gets A_1(1^{\lambda+n\log{u}})$ where $S \subset U$, and $|S| = n$.
\item $\M \leftarrow \BB_1(1^{\lambda+n\log{u}},S)$.
\item $x^* \leftarrow A_2^{\BB_2(\M,\cdot)}(1^{\lambda+n\log{u}},S)$ where 
$A_2$ 
performs at most $t$ adaptive queries $x_1 \till x_t$ to $\BB_2(\M,\cdot)$.
\item If $x^* \notin S \cup \{x_1 \till x_t\}$ and $\BB_2(\M,x^*)=1$ output 1, otherwise output 0.
\end{enumerate}
\end{definition}
\begin{figure}[h]
\centerline{\includegraphics[scale=0.5]{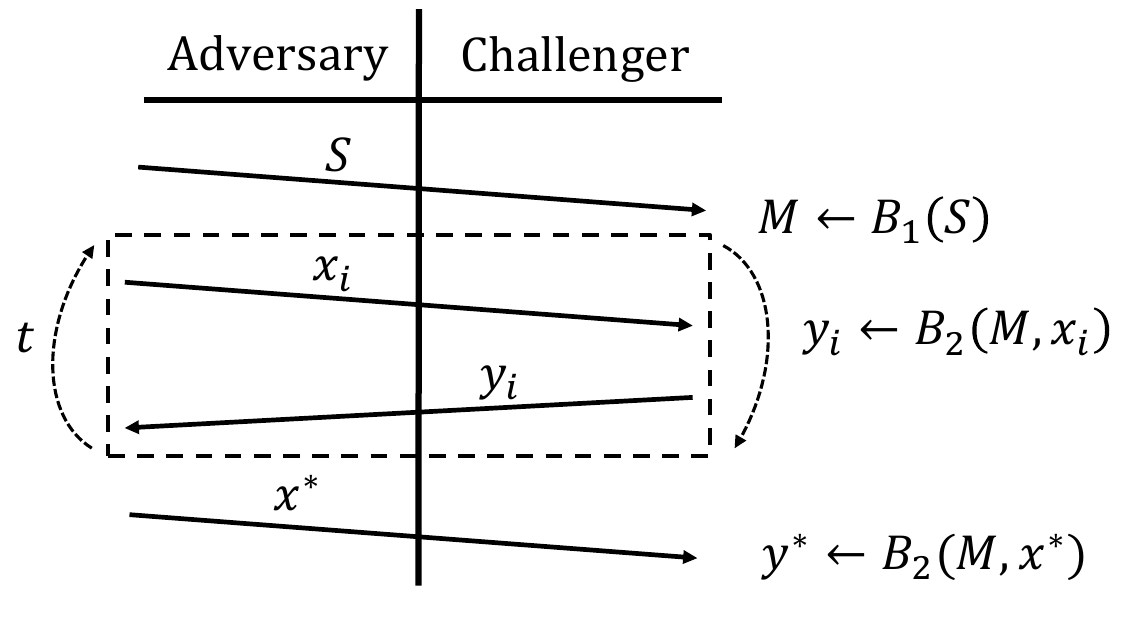}}
\caption{An illustration of Definition \ref{def:adversarialsteady}.
The adversary wins if $y^*=1 \myand x^* \notin S \cup \{x_1,\ldots,x_t\}$.}
\label{fig:def}
\end{figure}

\paragraph{Unsteady representations.}
When the Bloom filter has an unsteady representation, then the algorithm $\BB_2$
is randomized and moreover can change the representation $\M$. That is,
$\BB_2$ is a query algorithm that outputs the response to the query as well as  a new
representation. That is, the data structure has an internal state that might be 
changed with each query and the user can perform queries to this data 
structure. Thus, the user or the adversary do not interact
directly with $\BB_2(\M,\cdot)$ but with an interface $Q(\cdot)$ 
(initialized with some $\M^*$) that on query $x$ updates its 
representation $\M$ and outputs only the
response to the query (i.e., it cannot issue successive queries to the same 
memory representation but to one that keeps changing). Formally, $Q(\cdot)$ 
is with $\M^*$ and at each point in time has a memory $M$. Then, on input $x$ 
is acts as follows:
\paragraph{\uline{The interface $Q(x)$:}}
\begin{enumerate}
\item $(\M',y) \leftarrow \BB_2(\M,x)$.
\item $\M \leftarrow \M'$.
\item Output $y$.
\end{enumerate}
We define a variant of the above interface, denoted by $Q(x;x_1,\ldots,x_t)$ 
(initialized with $\M^*$) 
as 
first performing the queries $x_1,\ldots,x_t$ and then performing the query $x$ 
and output the result on $x$. Formally, we define:
\paragraph{\uline{$Q(x;x_1,\ldots,x_t)$}:}
\begin{enumerate}
\item For $i=1\ldots t$ query $Q(x_i)$.
\item Output $Q(x)$.
\end{enumerate}
We define an analog of the original Bloom filter for unsteady 
representations and then define an adversarial resilient one.

\begin{definition}[Bloom filter with an unsteady 
representation]\label{def:unsteady}
Let $\BB=(\BB_1,\BB_2)$ be a pair of 
probabilistic polynomial-time algorithms such that $\BB_1$ gets as input the 
set $S$ of size $n$ and outputs a representation $\M_0$, and $\BB_2$ gets as 
input a 
representation and query $x$ and outputs a new representation and a response 
to the query. Let $Q(\cdot)$ be the process initialized with $\M_0$. We say 
that $\BB$ is an $(n,\eps)$-Bloom
filter (with an unsteady representation) if for any such set $S$ the following two conditions hold:
\begin{enumerate}
	\item Completeness: For any $x \in S$, for any $t \in \N$ and for any 
	sequence of queries $x_1,x_2,\ldots,x_t$ we have that 
	$\Pr[Q(x;x_1,\ldots,x_t) = 1] = 1$.
	\item Soundness: For any $x \notin S$, for any $t \in \N$ and for any 
		sequence of queries $x_1,x_2,\ldots,x_t$ we have that 
		$\Pr[Q(x;x_1,\ldots,x_t) = 1] \le \eps$,
\end{enumerate}
where the probabilities are taken over the internal randomness of $Q$.
\end{definition}
The above definition is for fixed query sequences, where the next definition is 
for adaptively chosen query sequences.

\begin{definition}[Adversarial-resilient Bloom filter with an unsteady
representation]\label{def:adversarialunsteady}
Let $\BB=(\BB_1,\BB_2)$ be an $(n,\eps)$-Bloom filter with an unsteady 
representation (see Definition \ref{def:unsteady}). We say that $\BB$ is an 
$(n,t,\eps)$-adversarial resilient Bloom filter (with an unsteady 
representation) if for any probabilistic polynomial-time adversary 
$A=(A_1,A_2)$ for all large enough $\lambda \in \N$ it holds 
that:
\begin{itemize}
	\item Adversarial Resilient:
	$\Pr[\Challenge_{A,t}(\lambda)=1] \le \eps$,
\end{itemize}
where the probabilities are taken over the internal randomness of $\BB_1, \BB_2$ and
$A$ and where the random variable $\Challenge_{A,t}(\lambda)$ is the outcome of
the following process:
\paragraph{\uline{$\Challenge_{A,t}(\lambda)$:}}
\begin{enumerate}
\item $S \gets A_1(1^{\lambda+n\log{u}})$ where $S \subset U$, and $|S| \le n$.
\item $\M_0 \leftarrow \BB_1(S,1^{\lambda+n\log{u}})$.
\item Initialize $Q(\cdot)$ with $\M_0$.
\item $x^* \leftarrow A_2^{Q(\cdot)}(1^{\lambda+n\log{u}},S)$ where $A_2$ 
performs at
most $t$ adaptive queries $x_1 \till x_t$ to $Q(\cdot)$.
\item If $x^* \notin S \cup \{x_1 \till x_t\}$ and $Q(x^*)=1$ output 1, otherwise output 0.
\end{enumerate}
\end{definition}
If $\BB$ is not $(n,t,\eps)$-resilient then we say there exists an adversary
$A$ that can $(n,t,\eps)$-attack $\BB$.	
\noindent If $\BB$ is resilient for any
polynomial number of queries we say it is \emph{strongly resilient}:

\begin{definition}[Strongly resilient]
For a security parameter $\lambda$, we say that $\BB$ is an 
$(n,\eps)$-strongly resilient Bloom filter, if for any polynomial 
$t=t(\lambda,n)$ it holds that $\BB$ is an $(n,t,\eps)$-adversarial resilient 
Bloom filter.
\end{definition}

\begin{remark}[Access to the set $S$]
Notice that in Definitions \ref{def:adversarialsteady} and 
\ref{def:adversarialunsteady} the adversary $A_1$ chooses the set $S$, and then 
the adversary $A_2$ gets the 
set $S$ as an additional input. This strengthens the definition of the 
resilient Bloom filter such that even given the set $S$ it is hard to find 
false positives. An alternative definition might be to not give the adversary 
$A_2$ the set. However, our results of 
\Cref{thm:informal1} hold even if the 
adversary does not get the set. That is, the algorithm that predicts a false 
positive makes no use of the set $S$. Moreover, the construction in 
\Cref{thm:informal2} holds in both cases, even against adversaries that do get 
the set.
\end{remark}

An important parameter is the memory use of a Bloom filter $\BB$. We say $\BB$
uses $m=m(n,\lambda,\eps)$ bits of memory if for any set $S$ of size $n$
the largest representation is of size at most $m$. The desired properties of
Bloom filters is to have $m$ as small as possible and to answer membership
queries as fast as possible. Let $\BB$ be a $(n,\eps)$-Bloom filter that uses
$m$ bits of memory. Carter~\etalcite{CarterFGMW78} (see also 
\cite{DietzfelbingerP08} for further details) proved a lower bound on the
memory use of any Bloom filter showing that if $u > n^2/\eps$ then $m \ge 
n\leps$ (or written
equivalently as $\eps \ge 2^{-\frac{m}{n}}$). This leads us to define the 
minimal error of
$\BB$.

\begin{definition}[Minimal error]\label{def:minimal_error}
Let $\BB$ be an  $(n,\eps)$-Bloom filter that uses $m$ bits of
memory. We say that $\eps_0=2^{-\frac{m}{n}}$ is the minimal error of $\BB$.
\end{definition}

Note that using the lower bound of \cite{CarterFGMW78} we get that for any 
$(n,\eps)$-Bloom
filter its minimal error $\eps_0$ always satisfies $\eps_0 \le \eps$. For 
technical reasons, we will have a slightly different condition on the size of 
the universe, and we require that $u =\Omega(m/\eps_0^2)$. 
Moreover, if $u$ is super-polynomial in $n$, and $\eps$ is 
negligible in $n$ then any polynomial-time adversary has 
only negligible chance in finding any false positive, and again we say that 
the Bloom filter is trivial.

\begin{definition}[Non-trivial Bloom filter]\label{def:non-trivial}
Let $\BB$ be an $(n,\eps)$-Bloom filter that uses $m$ bits of memory and let
$\eps_0$ be the minimal error of $\BB$ (see Definition
\ref{def:minimal_error}). We say that $\BB$ is {\em non-trivial} if for all 
constants $a>0$ it holds that $u > \frac{a \cdot m}{\eps_0^2}$ and there exists 
a 
constant $c$ such that $\eps_0 > 
\frac{1}{n^c}$.
\end{definition}
\section{Our Techniques}\label{sec:tech}
\subsection{One-Way Functions and Adversarial Resilient Bloom
Filters}
We present the main ideas and techniques of the equivalence of adversarial
resilient Bloom filters and one-way functions (\ie the proof of
\Cref{thm:informal1,thm:informal2}). The simpler direction is showing that the
existence of one-way functions implies the existence of adversarial resilient
Bloom filters. Actually, we show that any Bloom filter can be efficiently
transformed to be adversarial resilient with essentially the same amount of
memory. The idea is simple and works in general for other data structures as
well:
apply a pseudo-random permutation of the input and then send it to the
original Bloom filter. The point is that an adversary has almost no advantage
in choosing the inputs adaptively, as they are all randomized by the
permutation, while the correctness properties remain under the permutation.

The other direction is more challenging. We show that if one-way functions do
not exist then any non-trivial Bloom filter can be ``attacked'' by an efficient 
adversary. That is, the adversary performs a sequence of queries and then 
outputs an element $x^*$ (that was not queried before) which is a false 
positive with high probability. We give two proofs: One for the case where the 
Bloom filter has a steady representation and one for an unsteady 
representation.

The main idea is that although we are given only oracle access to the Bloom 
filter, we are able to construct an (approximate) simulation of it. We 
use techniques from machine learning to (efficiently) `learn' the
internal memory of the Bloom filter, and construct the simulation. The 
learning task for steady and unsteady Bloom filters is quite different and 
each 
yield a simulation with different guarantees. Then we show how to exploit each 
simulation to find false positives without querying the real Bloom 
filter.

In the steady case, we state the learning
process as a `PAC learning' \cite{Valiant84} problem. We use what's known as
`Occam's Razor'
which states that any hypothesis consistent on a large enough random training
set will have a small error. Finally, we show that since we assume that one-way functions do
not exist, we are able to find a consistent hypothesis in polynomial time.
Since the error is small, the set of false positive elements
defined by the real Bloom filter is approximately the same set of false
positive elements defined by the simulator.

Handling Bloom filters with an unsteady representation is much more complex.
Recall that such Bloom filters are allowed to randomly change their internal
representation after each query. In this case, we are trying to learn a
distribution that might change after each sample. We describe two examples of
Bloom filters with unsteady representations which seem to capture the main
difficulties of the unsteady case.

The first example considers any ordinary Bloom filter with error rate
$\eps/2$, where we modify the query algorithm to first answer `1' with
probability $\eps/2$ and otherwise continue with its original behavior. The
resulting Bloom filter has an error rate of $\eps$. However, its behavior is
tricky: When observing its responses, elements can alternate between being
false positive and negatives, which makes the learning task much harder.

The second example consists of two ordinary Bloom filters with error rate
$\eps$, both initialized with the set $S$. At the beginning, only the first
Bloom filter is used, and after a number of queries (which may be chosen
randomly) only the second one is used. Thus, when switching to the second
Bloom filter the set of false positives changes completely. Notice that while
first Bloom filter was used exclusively, no information was leaked about the
second. This example proves that any algorithm trying to `learn' the memory of
the Bloom filter cannot perform a fixed number of samples (as does our
learning algorithm for the steady representation case).

To handle these examples, we apply the framework of adaptively changing
distributions (ACDs) presented by Naor and Rothblum \cite{NaorR06}, which 
models the task of learning
distributions that can adaptively change after each sample was studied. Their main result
is that if one-way functions do not exist then there exists an efficient
learning algorithm that can approximate the next activation of the ACD, that
is, produce a distribution that is statistically close to the distribution of
the next activation of the ACD. We show how to facilitate (a slightly
modified version of) this algorithm to learn the unsteady Bloom filter and
construct a simulation. One of the main difficulties is that since we get only 
a statistical distance guarantee, a false positive for the simulation 
need not be a false positive for the real Bloom filter. Nevertheless, we show 
how to estimate whether an element is a false positive in the real Bloom 
filter.

\subsection{Computationally Unbounded Adversaries}

In~\Cref{thm:informal3} we construct a Bloom Filter that is resilient against
any unbounded adversary for a given number ($t$) of queries. One immediate 
solution would be to imitate the construction of the computationally bounded 
case while replacing the pseudo-random permutation with a $k=(t+n)$-wise 
independent hash function. Then, any set of $t$ queries along with the $n$ 
elements of the set would behave as truly random under the hash function. The 
problem with this approach is that the representation of the hash function is 
too large: It is $O(k\log{|U|})$ which is more than the number of 
bits needed for a precise representation of the set $S$. Turning to almost 
$k$-wise independence does not help either. First, the memory will still be 
too large (it can be reduced to $O(n\log{n}\leps + 
t\log{n}\leps)$ 
bits) and second, we get that only sets chosen in advance will act as random, 
where the 
point of an adversarial resilient Bloom filter is to handle adaptively chosen 
sets.

Carter \etalcite{CarterFGMW78} presented a general transformation from any 
exact 
dictionary to a Bloom filter. The idea was
simple: storing $x$ in the Bloom filter translates to storing $g(x)$ in a
dictionary for some (universal) hash function $g:U \rightarrow V$, where 
$|V|=\frac{n}{\eps}$. The choice of the
hash function and underlying dictionary are important as they determine the 
performance and memory size of the Bloom filter. Notice that, at this point 
replacing $g$ with a $k=(t+n)$-wise independent hash function (or an almost 
$k$-independent hash function) yields the same problems discussed above. 
Nevertheless, this is the starting point for the construction, and we show how 
to overcome these issues. 
Specifically, we combine two main ingredients: Cuckoo hashing and a highly 
independent hash function tailored for this construction.

For the underlying dictionary in the transformation, we use the Cuckoo hashing 
construction \cite{PaghR04,Pagh08}. Using cuckoo hashing as
the underlying dictionary was already shown to yield good constructions for
Bloom filters by Pagh~\etalcite{PaghPR05} and 
Arbitman~\etalcite{ArbitmanNS10}. Among the many advantages of Cuckoo 
hashing (\eg succinct memory representation, constant lookup time) is the 
simplicity of its structure. It consists of two tables $T_1$ and $T_2$ and two 
hash functions $h_1$ and $h_2$ and each element $x$ in the Cuckoo dictionary 
resides in either $T_1[h_1(x)]$ or $T_2[h_2(x)]$. However, we use this 
structure a bit differently. Instead of storing $g(x)$ in the dictionary 
directly (as the reduction of Carter et al.\ suggests) which would resolve to 
storing $g(x)$ at either $T_1[h_1(g(x))]$ or $T_2[h_2(g(x))]$ we store $g(x)$ 
at either $T_1[h_1(x)]$ or $T_2[h_2(x)]$. That is, we use the full description 
of $x$ to decide where $x$ is stored but eventually store only a hash of $x$ 
(namely, $g(x)$). Since each element is compared only with two cells, we can 
reduce the size of 
$V$ to $O\paren{\frac{1}{\eps}}$ (instead of $\frac{n}{\eps}$).

To initialize the hash function $g$, instead of using a universal hash 
function we use a very high independence function (which in turn is also 
constructed based on cuckoo hashing) based on the work 
of Pagh and Pagh~\cite{PaghP08} and Dietzfelbinger and
Woelfel~\cite{DietzfelbingerW03}. They showed how to construct a family $G$ of 
hash functions so that on any given set of $k$ inputs it behaves like a truly 
random function with high probability. Furthermore, a function in $G$ can be
evaluated in constant time (in the RAM model), and its description can be
stored using roughly $O(k\log{|V|})$ bits (where $V$ is the range of the 
function).

Note that the guarantee of the function acting randomly holds only for
sets $S$ of size $k$ that are \emph{chosen in advance}. In our case, the set
is not chosen in advance but rather chosen adaptively and adversarially. 
However,
Berman \etalcite{BermanHKN13} showed that the construction of
Pagh and Pagh still works {\em even when the set of queries is chosen
adaptively}.

At this point, one solution would be to use the family of functions $G$ 
setting $k=t+n$, with the analysis of Berman et al.\ as the hash function $g$ 
and the structure of the Cuckoo hashing dictionary. To get an error of $\eps$, 
we set $|V| = O\paren{\leps}$ and get an adversarial resilient Bloom filter 
that is 
resilient for $t$ queries and uses $O\paren{n\leps + t\leps}$ bits of memory. 
However, our goal is to get a memory size of $O\paren{n\leps + t}$.

To reduce the memory of the Bloom filter even further, we use the family $G$ a 
bit differently. Let $\ell = O\paren{\leps}$, and set $k=O\paren{t/\ell}$. We 
define the 
function $g$ to be a concatenation of $\ell$
independent instances $g_i$ of functions from $G$, each outputting a single
bit ($V=\B$). Using the analysis of Berman et al.\ we get that
each of them behaves like a truly random function for any sequence of $k$
adaptively chosen elements. Consider an adversary performing $t$ queries. To 
see how this composition of 
hash functions helps reduce the independence needed, consider the 
comparisons performed in a query between $g(x)$ and some value $y$ being 
performed bit by bit. Only if the first pair of bits are equal we continue to 
compare the next pair. The next query continues from
the last pair compared, in a cyclic order. For any set of $k$ elements, the 
probability of the two bits to be equal is $1/2$. Thus, with high probability, 
only a constant number of bits will be compared during a single query. That 
is, in each query only a 
constant number of function $g_i$ will be involved and ``pay'' in their 
independence, where the rest remain untouched. Altogether, we get that 
although there are $t$ queries performed, we have $\ell$ 
different functions and each function $g_i$ is involved in at most 
$O(t/\ell) = k$ queries (with high probability). Thus,
the view of each function remains random on these elements. This results 
in an adversarial resilient Bloom filter that is resilient for $t$ queries and 
uses only $O(n\leps + k\leps)=O(n\leps + t)$ bits of memory.

\section{Adversarial Resilient Bloom Filters and One-Way Functions}
\label{sec:essential}
In this section, we show that adversarial resilient Bloom filters are
(existentially) equivalent to one-way functions (see Definition 
\ref{def:onewayfunctions}). We 
begin by showing that
if one-way functions do not exist, then any Bloom filter can be ``attacked''
by an efficient algorithm in a strong sense: 
\begin{theorem}\label{thm:no-owf}
Let $\BB=(\BB_1,\BB_2)$ be any non-trivial Bloom filter (possibly with unsteady 
representation) of $n$ elements that
uses $m$ bits of memory and let $\eps_0$ be the minimal error of 
$\BB$.\footnote{The definition of non-trivial is according to 
\Cref{def:non-trivial}. The definition of stead and unsteady representations 
see \Cref{def:steady,def:unsteady} respectively. The minimal error of a Bloom 
filter is defined in \Cref{def:minimal_error}.} If
one-way functions do not exist, then for any constant $\eps < 1$, $\BB$ is not
 $(n,t,\eps)$-adversarial resilient for $t=O\paren{m/\eps_0^2}$.
\end{theorem}

We give two different proofs; The first is self-contained (in particular, we do 
use the Impagliazzo-Luby~\cite{ImpagliazzoL89} 
technique of finding a random inverse), but, deals only with Bloom filters with 
steady representations.
The second handles Bloom filters with unsteady representations, and uses the
framework of adaptively changing distributions  of~\cite{NaorR06}.

\subsection{A Proof for Bloom Filters with Steady Representations.}\label{prf:steady}
\paragraph{Overview:} We prove~\Cref{thm:no-owf} for Bloom filters
with steady representation (see Definition~\ref{def:steady}). Actually, for
the steady case the theorem holds even for $t=O\paren{m/\eps_0}$. In both 
cases, the adversary $A_1$ chooses a uniformly random set $S$. Thus, we focus 
on describing the adversary $A_2$.

Assume that there are no
one-way functions. We want to construct an adversary that can attack the
Bloom filter. We define a function
$f$ to be a function that gets a set $S$, random bits $r$, and
elements $x_1,\ldots,x_t$, computes $\M=\BB_1(S;r)$ and outputs 
$x_1,\ldots,x_t$ along with their
evaluation on $\BB_2(\M,\cdot)$ (i.e.\ for each element $x_i$ the value
$\BB_2(\M,x_i)$).
Since $f$ is not one-way, there is an efficient
algorithm that can invert it with high probability\footnote{The algorithm can
invert the function for infinitely many input sizes. Thus, the adversary we
construct will succeed in its attack on the same (infinitely many) input
sizes.}. That is, the algorithm is
given a random set of elements labeled with the information whether they are 
(false) positives or
not and it outputs a set $S'$ and bits $r'$. For $\M'=\BB_1(S';r')$
the function $\BB_2(\M',\cdot)$ is consistent with $\BB_2(\M,\cdot)$ for all
the elements $x_1,\ldots,x_t$. For a large enough set of
queries we show that $\BB_2(\M',\cdot)$ is actually a good approximation of
$\BB_2(\M,\cdot)$ as a function from $U$ to $\B$.

We use $\BB_2(\M',\cdot)$ to find an
input $x^* \not \in S$ such that
$\BB_2(\M',x^*)=1$ and show that $\BB_2(\M,x^*)=1$ as well (with high probability). This 
contradicts $\BB$ being
adversarial-resilient and proves that $f$ is a (weak) one-way function (see 
Definition \ref{def:weakOWF}).

\begin{proof}[Proof of \Cref{thm:no-owf}(for the steady case).]
Let $\BB=(\BB_1,\BB_2)$ be an adversarial-resilient Bloom filter (see 
Definition \ref{def:adversarialsteady}) that uses $m$
bits of memory, initialized with a random set $S$ of size $n$, and let
$\M=\BB_1(S)$ be its representation. Assume that one-way functions do not
exist. Our goal is to construct an algorithm that, given access to
$\BB_2(M,\cdot)$, finds an element $x^* \notin S$ that is
a false positive with probability greater than $\eps$. For the simplicity of 
presentation, we assume $\eps \le 2/3$ (for other values of $\eps$ the same 
proof works while adjusting the constants appropriately). We need to show an 
attack on infinitely many $n$'s, and thus throughout the proof we assume that 
$n$ and $m$ are large enough.
We describe the function $f$ (which we intend to invert).

\paragraph{The Function $f$.}	
The function $f$ takes $N=\log{u \choose n} + r(n) + t\log{u}$ bits as inputs
where $r(n)$ is the number of random coins that $\BB_1$ uses ($r(n)$ is
polynomial since $\BB_1$ runs in polynomial time) and $t =
\frac{200m}{\eps_0}$.
The function $f$ uses the
first $\log{u \choose n}$ bits to sample a set $S$ of size $n$, the next
$r(n)$ bits (denoted by $r$) are used to run $\BB_1$ on $S$ and get
$\M^S_r=\BB_1(S;r)$. The last $t\log{u}$ bits are interpreted as $t$ elements
of
$U$ denoted $x_1, \ldots, x_t$. The output of $f$ is a sequence of these
elements along with their evaluation by $\BB_2(\M^S_r,\cdot)$. Formally, the
definition of $f$ is:
\begin{align*}
	f(S,r,x_1, \ldots ,x_t) = x_1, \ldots ,x_t,\BB_2(\M^S_r,x_1), \ldots
	,\BB_2(\M^S_r,x_t)
\end{align*}
where $\M^S_r=\BB_1(S;r)$.

It is easy to see that $f$ is polynomial-time computable. Moreover, we can obtain the output of $f$ on a uniform input by sampling
$x_1,\ldots,x_t$ and querying the oracle $\BB_2(\M,\cdot)$ on these elements.
As shown before
(see \cite{Goldreich2001} Section 2.3), if one-way functions do not exist then
weak one-way functions (see Definition \ref{def:weakOWF}) also do not exist. 
Thus, we can assume that $f$ is
not weakly one-way. In particular, we know that there exists a polynomial-time 
algorithm $A$
that inverts $f$ with probability at least $1-1/100$. That is:
\begin{align*}
	\Pr[f(A(f(S,r,x_1, \ldots ,x_t))) = f(S,r,x_1, \ldots ,x_t)]	\ge
	1-1/100.
\end{align*}
Using $A$ we construct an algorithm $\Predict$ that will find a false positive
$x^*$ using $t$ queries with probability $2/3$. The description of the
algorithm is given in \Cref{fig:predict_algorithm}.

\begin{figure}[!h]
	\begin{boxedminipage}{\textwidth}
		\small \medskip \noindent
		
		\vspace{3mm} \textbf{The Algorithm \Predict}
		
		\vspace{3mm} \emph{Given}: Oracle access to the query algorithm
		$\BB_2(\M,\cdot)$. The set $S$.
		
		\vspace{3mm} \emph{Input}: $1^\lambda,n,m,u$.
		
		\begin{enumerate}
			\item For $i \in [t]$ sample $x_i \in U$ uniformly at random and
			query $y_i=\BB_2(\M,x_i)$.
			\item Run $A$ (the inverter of $f$) on $(x_1, \ldots
			,x_t,y_1,\ldots,y_t)$ and $1^\lambda$ to get an inverse 
			$(S',r',x_1, 
			\ldots ,x_t)$.
			\item Compute $\M'=\BB_1(S';r')$.
			\item Do $k=\frac{200}{\eps_0}$ times:
			\begin{enumerate}
				\item Sample $x^* \in U \setminus \{x_1,\dots,x_t\} \cup  
				S$ 
				uniformly 
				at random.
				\item If $\BB_2(\M',x^*)=1$ output $x^*$ and HALT.
			\end{enumerate}
			\item Output an arbitrary $x^* \in U$.
		\end{enumerate}
		
		\vspace{1mm}
		
	\end{boxedminipage}
	
	\caption{The description of the algorithm $\Predict$.}
	\label{fig:predict_algorithm}
\end{figure}
We need to show that the success probability of $\Predict$ is more than 
$2/3$. That is, if $x^*$ is the output of the $\Predict$ algorithm then we
want to show that
$\Pr[\BB_2(\M,x^*)=1] \ge 2/3$. Our first step is showing that if $A$
successfully inverts $f$ then with high probability the resulting $\M'$
defines a function
that agrees with $\BB_2(\M,\cdot)$ on almost all points. For any
representations $\M,\M'$ define their error by:
\begin{align*}
\err(\M,\M') :=\Pr_{x \in U}[\BB_2(\M,x) \ne \BB_2(\M',x)],
\end{align*}
where $x$ is chosen uniformly at random from $U$.
Using this notation we prove the following claim which is very similar to what
is known as ``Occam's Razor'' in learning theory \cite{BlumerEHW89}.
\begin{claim}
Let $t = \frac{1000m}{\eps_0}$. Then, for any representation $\M$, over the 
random choices of $x_1,\ldots,x_t$,
the probability that there exists a representation $\M'$ that is consistent
with $\M$ (\ie that for $i \in [t]$, $\BB_2(\M,x_i)=\BB_2(\M',x_i)$) and
$\err(\M,\M') > \frac{\eps_0}{100}$ is at most $1/100$.
\end{claim}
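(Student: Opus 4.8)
The plan is to reduce to the standard union-bound argument behind Occam's razor, where the crucial quantity is the number of distinct hypotheses we must rule out. First I would fix the representation $\M$ and call a representation $\M'$ \emph{bad} if $\err(\M,\M') > \eps_0/100$. For a single fixed bad $\M'$, each query point $x_i$ is drawn uniformly from $U$, so the probability that $\BB_2(\M,x_i) = \BB_2(\M',x_i)$ is exactly $1 - \err(\M,\M') < 1 - \eps_0/100$. Since the $x_i$ are independent, the probability that $\M'$ is consistent with $\M$ on all $t$ of them is at most $(1-\eps_0/100)^t \le e^{-t\eps_0/100}$.

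Next I would take a union bound over all bad representations. The key observation is that a representation occupies at most $m$ bits of memory, so there are at most $2^{m+1}$ possible representations (I would be slightly careful here: the theorem's hypothesis says $\BB$ uses $m$ bits, i.e.\ the largest representation on any size-$n$ set has size at most $m$, so the number of bit-strings of length $\le m$ is $\sum_{i=0}^{m} 2^i < 2^{m+1}$). Hence the probability that \emph{some} bad $\M'$ is consistent with $\M$ on $x_1,\dots,x_t$ is at most $2^{m+1} \cdot e^{-t\eps_0/100}$. Plugging in $t = 200m/\eps_0$ gives $2^{m+1} \cdot e^{-2m} \le 2^{m+1} \cdot 2^{-2m} = 2^{1-m}$, which is at most $1/100$ for all $m$ that are not tiny; for the handful of small $m$ one adjusts the constant $200$ in $t$ (or simply absorbs it, since $\eps_0 \le 1$ makes $t$ large). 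This completes the bound.

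The only mild subtlety — and the step I would be most careful about — is the counting of representations. One must make sure that "consistency" is checked with respect to the \emph{deterministic} query algorithm $\BB_2$ applied to a candidate string, not to the randomized $\BB_1$: since $\BB_2$ is deterministic, the boolean function $x \mapsto \BB_2(\M',x)$ depends only on the bit-string $\M'$, so counting bit-strings of length $\le m$ genuinely counts all possible hypotheses. (This is precisely why the steady-representation case is easy and the unsteady case needs the ACD machinery.) A second point to state explicitly is that the random choice in the claim is only over $x_1,\dots,x_t$, with $\M$ arbitrary and fixed; this is exactly the setting in which the subsequent use of the claim — where $\M$ is the real representation and $A$ produces \emph{some} consistent $\M'$ — applies, because the bound holds uniformly over all choices of the consistent $\M'$ the inverter might return. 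I would conclude by noting that the bound $e^{-t\eps_0/100}$ already shows $t = O(m/\eps_0)$ suffices in the steady case, matching the improved bound promised in the overview.
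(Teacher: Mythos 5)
Your proof is correct and follows essentially the same route as the paper: fix $\M$, bound the probability that a single bad hypothesis $\M'$ (with $\err > \eps_0/100$) is consistent on $t$ i.i.d.\ uniform samples by $(1-\eps_0/100)^t$, union-bound over the at most $2^{O(m)}$ possible representations, and plug in $t = 200m/\eps_0$. Your extra care about counting all bit-strings of length at most $m$ (giving $2^{m+1}$ rather than $2^m$) and your explicit note that the argument hinges on $\BB_2$ being deterministic are both fine and do not change the conclusion.
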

\begin{proof}
Fix $\M$ and consider any $\M'$ such that $err(\M,\M') > \eps_0/100$. We want
to bound from above the probability over the choice of $x_i$'s that $\M'$ is
consistent with $\M$ on $x_1, \ldots, x_t$. From the independence of the
choice of the $x_i$'s we get that
\begin{align*}
\Pr_{x_1, \ldots, x_t}\left[\forall i \in [t] \colon
\BB_2(\M,x_i)=\BB_2(\M',x_i)\right] \le \paren{1-\frac{\eps_0}{100}}^t.
\end{align*}
Since the data structure uses $m$ bits of memory, there are at most $2^m$
possible representations and at most $2^m$ candidates for $\M'$.
Taking a union bound over the all candidates and for $t = \frac{800m}{\eps_0}$
we get that the probability that there exists such a $\M'$ is:
\begin{align*}
\Pr_{x_1,\dots,x_t}\left[\exists \M' \colon \forall i \in [t],
\BB_2(\M,x_i)=\BB_2(\M',x_i)\right]
\le 2^m\paren{1-\frac{\eps_0}{100}}^t \le
2^m \cdot e^{-10m} \le \frac{1}{100}.
\end{align*}
By the definition of $f$, if $A$ successfully inverts $f$ then it must output 
$S'$ and $r'$ such that $\M'=\BB_1(S';r')$ is a representation that is 
consistent with $\M$ on all samples $x_1,
\ldots, x_t$. Thus, assuming $A$
inverts successfully we get that the probability that $\err(\M,\M') >
\eps_0/100$ is at most $1/100$.
\end{proof}

Given that $\err(\M,\M') \le \eps_0/100$, we want to show that with high
probability step 4 will halt (on step 4.b). Define
$\mu(\M)=\Pr_{x \in U}[\BB_2(\M,x)=1]$ to be the fraction of positives (false 
and
true) of $\M$. The number of false positives might depend on $S$. For
instance, a Bloom filter might store the set $S$ precisely if $S$ is some
special set fixed in advance, and then $\mu(M)=n$. However, we show that for 
most sets the fraction of positives must be approximately $\eps_0$.
\begin{claim}\label{clm:many_positives}
For any Bloom filter with minimal error $\eps_0$ it holds that:
\begin{align*}
\Pr_S \left [\exists r:\mu\paren{\M^S_r} \le \frac{\eps_0}{8} \right] \le 
2^{-n}
\end{align*}
where the probability is taken over a uniform choice of a set $S$ of size $n$
from the universe $U$.
\end{claim}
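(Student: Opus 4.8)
The plan is to bound, for each fixed randomness string $r$ of $\BB_1$, the number of sets $S$ for which $\mu(\M^S_r)$ is too small, and then union bound over $r$. The key observation is that a single representation $\M$ cannot simultaneously ``serve'' too many distinct sets of size $n$ if $\mu(\M)$ is small: if $\M = \M^S_r$ for some set $S$, then by the completeness property every $x \in S$ must satisfy $\BB_2(\M,x)=1$, so $S$ is contained in the ``positive set'' $P_\M := \{x \in U : \BB_2(\M,x)=1\}$, whose density is exactly $\mu(\M)$. Hence the number of size-$n$ sets $S$ with $\M^S_r = \M$ (for some fixed $r$) is at most $\binom{|P_\M|}{n} = \binom{\mu(\M)u}{n}$.

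First I would fix $r$ and let $\mathcal{M}_r^{\text{bad}}$ be the set of representations $\M$ reachable as $\M^S_r$ for some size-$n$ set $S$ with $\mu(\M)\le \eps_0/8$. Since $\BB_1$ uses $m$ bits of memory, $|\mathcal{M}_r^{\text{bad}}| \le 2^m$. Summing the per-representation count above,
\[
\#\{S : \mu(\M^S_r) \le \eps_0/8\} \;\le\; 2^m \binom{(\eps_0/8)\,u}{n}.
\]
Taking a union bound over all $2^{r(n)}$ choices of $r$ is wasteful; instead, since the event in the claim is $\exists r$, I would bound the number of ``bad'' sets $S$ by $2^m \cdot 2^{r(n)} \cdot \binom{(\eps_0/8)u}{n}$ — but this reintroduces the $2^{r(n)}$ factor. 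The cleaner route is to note that for the claim we only need the probability over $S$, and it suffices to bound $\#\{S : \exists r, \mu(\M^S_r)\le \eps_0/8\}$; this is at most the number of pairs $(S,\M)$ with $\M$ reachable from $S$ and $\mu(\M)\le\eps_0/8$, where $\M$ ranges over \emph{all} $\le 2^m$ representations regardless of $r$. Thus the count is $\le 2^m \binom{(\eps_0/8)u}{n}$, and dividing by $\binom{u}{n}$ gives
\[
\Pr_S\!\left[\exists r : \mu(\M^S_r)\le \tfrac{\eps_0}{8}\right] \;\le\; 2^m\,\frac{\binom{(\eps_0/8)u}{n}}{\binom{u}{n}}.
\]

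It remains to show $2^m \binom{(\eps_0/8)u}{n}/\binom{u}{n} \le 2^{-n}$. Using $\binom{(\eps_0/8)u}{n}/\binom{u}{n} \le (\eps_0/8)^n \cdot (\text{lower-order factor}) \le (\eps_0 u/(8n))^n \cdot (\text{stuff})$, more simply $\binom{\alpha u}{n} \le (\alpha u)^n/n! $ and $\binom{u}{n}\ge (u/n)^n \cdot (1/e^n)$ roughly, one gets the ratio is at most $(e\alpha)^n$ up to standard factors with $\alpha=\eps_0/8$; and then I would invoke the non-triviality hypothesis. Recall $\eps_0 = 2^{-m/n}$, so $2^m = \eps_0^{-n}$; thus $2^m (e\eps_0/8)^n \le (e/8)^n \le 2^{-n}$ since $e/8 < 1/2$. (The slack between $8$ and $e$ versus the needed constant absorbs the lower-order binomial-ratio factors; I would use Lemma \ref{clm:logbinom} or a direct estimate to make this rigorous, choosing the constant $8$ large enough that after the $e$ and the polynomial/Stirling corrections the base is still below $1/2$.) The main obstacle is the bookkeeping in the last step: getting the constant in ``$\eps_0/8$'' large enough that $2^m \cdot \binom{(\eps_0/8)u}{n}/\binom{u}{n}$ genuinely drops below $2^{-n}$ after accounting for the $\binom{u}{n} \le u^n/n!$-type corrections, while still keeping the constant small enough to be useful downstream; this is why the statement uses $\eps_0/8$ rather than, say, $\eps_0/2$.
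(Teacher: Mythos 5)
Your proof is correct and is essentially the paper's argument in counting form rather than encoding form: the paper encodes a bad set $S$ by writing down $\M^S_r$ ($m$ bits) followed by the index of $S$ among size-$n$ subsets of $\widehat{S}=\{x:\BB_2(\M^S_r,x)=1\}$, concluding $\log|\BAD|\le m+\log\binom{\eps_0u/8}{n}\le\log\binom{u}{n}-n$, which is exactly your bound $|\BAD|\le 2^m\binom{(\eps_0/8)u}{n}$ followed by the same Stirling-type estimate (the paper uses Lemma~\ref{clm:logbinom}, you use $\binom{\alpha u}{n}/\binom{u}{n}\le(e\alpha)^n$, but both close the same $e/8<1/2$ gap). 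One small correction: the non-triviality hypothesis is not needed here and the paper does not invoke it for this claim — the identity $2^m=\eps_0^{-n}$ together with the choice of constant $8$ already suffices, as your own final calculation $(e/8)^n\le 2^{-n}$ shows.
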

\begin{proof}
Let $\BAD$ be the set of all sets $S$ such that there exists an $r$ such
that $\mu(\M^S_r) \le \frac{\eps_0}{8}$. Since the number of sets $S$ is
${u \choose n}$, we need to show that $|\BAD| \le 2^{-n}{u \choose n}$. We
show this using an encoding argument for $S$. Given $S \in \BAD$ there is
an $r$ such that $\mu(\M^S_r) \le \frac{\eps_0}{8}$. Let $\widehat{S}$ be the
set of all elements $x$ such that $\BB_2(\M^S_r,x)=1$. Then, $|\widehat{S}| \le
\frac{\eps_0 u}{8}$, and we can encode the set $S$ relative to $\widehat{S}$
using the
representation $\M^S_r$: Encode $\M^S_r$ and then specify $S$ from all
subsets of $\widehat{S}$ of size $n$. This encoding must be more than
$\log{|\BAD|}$ bits and hence we get the bound:

\begin{align*}
	\log{|\BAD|} & \le m + \log{{{\eps_0 u /8} \choose n}} 
	 \le m + n\logp{\eps_0 u /8} - n\log{n} + 2n 
	 \le -n + \log{{u \choose n}},
\end{align*}

where the second inequality follows from \Cref{clm:logbinom}.
\end{proof}

Assuming that $\M'$ is an approximation of $\M$, it follows that $\mu(\M')
\approx \mu(\M)$, and therefore in step 4 with high probability we will find
an $x^*$ such that $\BB_2(\M,x^*)=1$. Namely, we get the following claim.
\begin{claim}\label{clm1}
Assume that $\err(\M,\M') \le \eps_0/100$ and that $\mu(\M) \ge \eps_0/8$.
Then, with probability at least $1-1/100$ the algorithm $\Predict$ will halt
on step 4, where the probability is taken over the internal randomness of
$\Predict$ and over the random choices of $x_1,\ldots,x_t$.
\end{claim}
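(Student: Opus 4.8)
The plan is to show that the two hypotheses force the set of $\M'$-positives that avoid the already-queried elements to have density $\Omega(\eps_0)$, so that the $\Theta(1/\eps_0)$ fresh uniform samples drawn in step~4 of $\Predict$ hit it except with probability $1/100$.

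First I would lower-bound the positive density of $\M'$. Since $\BB_2(\M',\cdot)$ disagrees with $\BB_2(\M,\cdot)$ on only an $\err(\M,\M')$-fraction of the universe,
\[
\mu(\M') \;\ge\; \mu(\M) - \err(\M,\M') \;\ge\; \frac{\eps_0}{8} - \frac{\eps_0}{100} \;>\; \frac{\eps_0}{10},
\]
so a uniform $x^*\in U$ is an $\M'$-positive with probability at least $\eps_0/10$. Next I would subtract off the (at most $t$) forbidden elements checked in step~4(b): letting $G=\{x\in U:\BB_2(\M',x)=1\}\setminus\{x_1,\dots,x_t\}$ be the set of outputs that make that step succeed, deleting at most $t$ points gives $|G|/u\ge\mu(\M')-t/u\ge \eps_0/10 - t/u$. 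Invoking the non-triviality of $\BB$ — which makes $u$ large relative to $n/\eps_0$, and hence, since $t=200m/\eps_0$ with $m=n\log(1/\eps_0)$, makes $t/u\le\eps_0/20$ — we get $|G|/u\ge\eps_0/20$. Finally, the $100/\eps_0$ iterations of step~4 draw independent uniform $x^*$, so the probability that none lands in $G$ is at most $(1-\eps_0/20)^{100/\eps_0}\le e^{-5}<1/100$, using $(1-x)^{1/x}\le e^{-1}$; this is exactly the claim.

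I expect the middle step to be the only real obstacle: ruling out that the $t$ queried elements swallow essentially all of the $\M'$-positives. With the crude bound $|G|\ge\mu(\M')u - t$ this needs $u\gtrsim t/\eps_0$; a tighter accounting that bounds the number of queried elements which happen to be $\M'$-positives by a Chernoff bound around its mean $t\,\mu(\M')=\Omega(m)=\Omega(n)$ (so concentration costs only an $e^{-\Omega(n)}$ additive slack in the failure probability) relaxes the requirement to $u\gtrsim t$. Either way, this is precisely the place where the non-triviality hypothesis on $\BB$ is consumed; the rest is a short union bound over the bad events ``$A$-produced $\M'$ too far from $\M$'' (already handled by the Occam-type claim) and ``step~4 never halts'', together with the elementary tail estimate above.
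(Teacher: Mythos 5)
Your skeleton matches the paper's for one of its two regimes: lower-bounding $\mu(\M')$, subtracting off the elements already touched by $\Xc=\{x_1,\dots,x_t\}$, and finishing with the elementary tail estimate $(1-\eps_0/20)^{100/\eps_0}<1/100$. All of that is fine. The gap is exactly where you flagged it as ``the only real obstacle,'' and your proposed fix does not close it.

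You assert that non-triviality ``makes $u$ large relative to $n/\eps_0$, and hence \dots makes $t/u\le\eps_0/20$.'' But the first non-triviality condition only says $\eps_0 > an/u$ for every constant $a$, i.e., $u=\omega(n/\eps_0)$. What you actually need is $u\ge 20t/\eps_0 = 4000\,n\log(1/\eps_0)/\eps_0^2$, which exceeds $n/\eps_0$ by a factor of $4000\log(1/\eps_0)/\eps_0$. That ratio is not a constant, and since $\eps_0$ is only guaranteed to satisfy $\eps_0>1/n^c$, it can grow polynomially in $n$. So $u=\omega(n/\eps_0)$ does not imply $u\ge 20t/\eps_0$. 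Your Chernoff tightening (bounding $|\widehat{S'}\cap\Xc|$ by $O(m)$ rather than $t$) improves the requirement to $u\gtrsim m/\eps_0$, but that is still $n\log(1/\eps_0)/\eps_0$, and the same mismatch by a $\log(1/\eps_0)$ factor remains; for example $\eps_0=1/\log n$ and $u=n\log n\log\log n$ satisfies both non-triviality clauses while violating $u\gtrsim m/\eps_0$.

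The paper resolves this by splitting on the size of $u$, and both clauses of non-triviality are consumed. When $u$ is super-polynomial in $n$, it shows $40m/u\le 40n\log u / u$ is negligible while $\eps_0/20 > 1/(20n^c)$ is inverse-polynomial (this is where $\eps_0>1/n^c$ is used, a hypothesis your argument never touches), so the Chernoff-tightened density bound $|\widehat{S'}\setminus\Xc|/u\ge\eps_0/20$ does hold and the tail estimate goes through --- this branch is essentially your argument, made rigorous. When $u$ is merely polynomial in $n$, the density bound can genuinely fail, and the paper switches to a different mechanism: it argues via a coupon-collector-style count that the (at most $40m$) positives swallowed by $\Xc$ cannot exhaust $\widehat{S'}$ (whose size is at least $an/10$ for a suitably chosen constant $a$), so $\widehat{S'}\setminus\Xc$ is nonempty and can be located by exhaustive search over the polynomially-sized universe. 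So the missing idea is the case split on $u$ together with the polynomial-universe argument; your uniform-sampling analysis alone cannot cover the polynomial-$u$ regime, because there the per-sample success probability $|\widehat{S'}\setminus\Xc|/u$ really can be $o(\eps_0)$.
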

\begin{proof}
Recall that $\BB$ is non-trivial and by definition, 
we have that $\eps_0 > am/u$ (for any constant $a$) and that there exists a 
constant $c \ge 1$ such 
that $\eps > 1/n^c$. Since $\err(\M',\M) \le \eps_0/100$ and $\mu(\M) \ge 
\eps_0/8$ we get that
$$\mu(\M') \ge \frac{\eps_0}{8} - \frac{\eps_0}{100} > \frac{\eps_0}{10}.$$
Let $\widehat{S'}=\{x:\BB_2(\M',x)=1\}$ be the set of positives (false and 
true) relative to $M'$.
Let $\Xc = \{x_1 \till x_t\}$ be a multiset of the $t$ elements sampled by the 
algorithm. 
The probability of each element to be sampled from $\widehat{S'}$ is 
$\mu(\M')$, and thus the expectation is:
$$\Eparen{|\widehat{S'} \cap \Xc|} = t\cdot \mu(\M') \ge \frac{1000m}{\eps_0} 
\cdot \frac{\eps_0}{10} = 100m.$$
By a Chernoff bound we get that:
\begin{align*}
\Pr\left[|\widehat{S'} \cap \Xc| < 200m \right] > 1-e^{-\Omega\paren{m}}.
\end{align*}
Thus, choosing a large enough constant $a$ we get that
\begin{align*}
|\widehat{S'} \setminus (\Xc \cup S)| \ge |\widehat{S'}| - |\widehat{S'} 
\cap \Xc|  - |S| \ge u \cdot \mu(M') - 200m - n \ge \frac{u\eps_0}{10} - 200m - 
n.
\end{align*}
Thus, we can bound the probability of sampling a successful $x^*$
\begin{align*}
\Pr_{x^* \in U \setminus \{x_1,\dots,x_t\} \cup S}[x^* \in \widehat{S'}] \ge 
\frac{|\widehat{S'} \setminus (\Xc \cup S)|}{u} \ge \frac{\eps_0}{10} - 
\frac{200m}{u} - n/u \ge \frac{\eps_0}{10} - \frac{\eps_0}{20} = 
\frac{\eps_0}{20},
\end{align*}
where the last inequality holds since $u > \frac{am}{\eps_0}$ for any constant 
$a>0$.
That is, the probability of failing is at most $1-\eps_0/20$ and thus the 
probability of failing in all $k=\frac{100}{\eps_0}$ attempts is at most
$$
\paren{1-\frac{\eps_0}{20}}^{k} =
\paren{1-\frac{\eps_0}{20}}^{\frac{200}{\eps_0}}  <
1/100,
$$
and the claim follows.
\end{proof}
Assume that we found a random element $x^*$ that was never queried before such
that $\BB_2(\M',x^*)=1$. Since $\err(\M',\M) \le \eps/100$ we have that
\begin{align*}
	\Pr_x[\BB_2(\M,x^*) = 0 ~\mid~ \BB_2(\M',x^*)=1] \le 1/100.
\end{align*}
Altogether, taking a union bound on all the failure points we get that the
probability of $\Predict$ to fail is at most $4/100 < 1/3$ as required.
\end{proof}

\subsection{Handling Unsteady Bloom Filters}
In the previous section, we have given a proof for \Cref{thm:no-owf} for any 
{\em steady} Bloom filter. In this section, we describe the proof of the 
theorem that handles Bloom filters with an {\em unsteady} representation as 
well. A Bloom filter
with an unsteady representation (see Definition \ref{def:unsteady}) has 
a \emph{randomized} query algorithm and may
change the underlying representation after each query. We want to show
that if one-way functions do
not exist then we can construct an adversary, $\PredictUnsteady$, that 
`attacks' this Bloom filter.

\paragraph{Hard-core Positives.}
Let $\BB=(\BB_1,\BB_2)$ be an $(n,\eps)$-Bloom filter with an unsteady 
representation that uses $m$ bits of memory (see Definition 
\ref{def:unsteady}). Let $\M$ and $\M'$ be two representations of a set $S$ 
generated by $\BB_1$. In the previous proof in \Cref{prf:steady}, given a 
representation $\M$ we considered $\BB_2(M,\cdot)$ as a boolean function. We 
defined the function $\mu(\M)$ to measure the number of positives in 
$\BB_2(M,\cdot)$ and we defined the error between two representations 
$\err(\M,\M')$ to measure the fraction of inputs that the two boolean 
functions agree on. These definitions make sense only when $\BB_2$ is 
deterministic and does not change the representation. However, in the case of 
Bloom filters with unsteady representations, we need to modify the definitions 
to have new meanings.

Given a representation $\M$ consider the query interface $Q(\cdot)$ 
initialized with $\M$. For an
element $x$, the probability of $x$ being a false positive is
$\Pr[Q(x)=1]=\Pr[\BB_2(M,x)=1]$. Recall that after querying
$Q(\cdot)$, the interface updates its representation and the probability of $x$
being a false positive might change (it could be higher or lower). We say that
$x$ is a `hard-core positive' if after any arbitrary sequence of queries we
have that $\Pr[Q(x)=1]=1$. That is, the query interface will always respond
with a `1' on $x$ even after any sequence of queries. Then, we define
$\mu(\M)$ to be the set of hard-core positive elements in $U$. Note that
over the time, the size of $\mu(\M)$ might grow, but by definition it can never 
become smaller. We observe that what Claim \ref{clm:many_positives} actually 
proves is
that for almost all sets $S$ the number of \emph{hard-core} positives is large.

\paragraph{The Distribution $D_{\M}$.}
As we can no longer talk about the \emph{function} $\BB_2(\M,\cdot)$ we turn
to
talk about \emph{distributions}. For any representation $\M$, define the
distribution $D_{\M}$: Sample $k$ elements at random $x_1,\ldots,x_{k}$ ($k$ 
will be determined later), and output
$(x_1,\ldots,x_{k},Q(x_1),\dots, Q(x_{k}))$. Note that the underlying
representation $\M$ changes after each query. The precise algorithm of
$D_{\M}$ is given by:

\noindent\uline{$D_M$:}
\begin{enumerate}
	\item Sample $x_1,\ldots,x_{k} \in U$ uniformly at random.
	\item For $i=1 \till k$: compute $y_i=Q(x_i)$.
	\item Output $(x_1,\ldots,x_{k},y_1,\dots, y_{k})$.
\end{enumerate}

Let $\M_0$ be a representation of a random set $S$ generated by $\BB_1$, and 
let $\eps_0$ be the
minimal error of $\BB$. Let $D_{\M_0}$ be the distribution described above 
using $M_0$. Assume that one-way functions do not exist. Our goal
is to construct an algorithm $\PredictUnsteady$ that will
`attack' $\BB$, that is, it is given 
access to $Q(\cdot)$ initialized with $\M_0$
($\M_0$ is secret and not known to $\PredictUnsteady$) and it must find a 
non-set element $x^*$ such that $\Pr[Q(x^*)=1] \ge 2/3$.

Consider the distribution $D_{\M_0}$, and notice that given access to
$Q(\cdot)$ we can perform a single sample from $D_{\M_0}$: sample $k$ random 
elements and query $Q(\cdot)$ on these elements. Let $M_1$ be
the random variable of the resulting representation after the sample. Now, we
can sample from the distribution $D_{\M_1}$, and then $D_{\M_2}$ and so
on. We begin by describing a simplified version of the proof where we assume 
that $\M_0$ 
is known to the adversary. This version seems to capture the main ideas. 
Then, in \Cref{DynamicFullProof} we show how to eliminate this assumption and 
get a full proof.

\paragraph{Attacking when $\M_0$ is known.}
Suppose that after activating $D_{\M_0}$ for $r$ rounds we are given the 
initial state $\M_0$ (of course, in the actual execution $\M_0$ is secret and 
later we show how to overcome this assumption). Let $p_1, \ldots, p_r$ be the 
outputs of the rounds (that is, $p_i=(x_1 \till x_k,y_1 \till y_k)$). For a 
specific output $p_i$ we say that $x_j$ was labeled `1' if $y_j=1$.

We define a few variants of the distribution $D_{\M_0}$. These variants all 
have same output format (\ie they output $(x_1 \till x_k,y_1 \till y_k)$), and 
differ only by their given probabilities. 
Denote by $D_{\M_0}(p_0 \till p_r)$ the distribution over the $\ith{(r+1)}$
activation of $D_{\M_0}$ conditioned on the first $r$ activations resulting in
the states $p_0 \till p_r$. Computational issues aside, the
distribution $D_{\M_0}(p_0 \till p_r)$ can
be sampled by enumerating all random strings such that when used to run
$D_{\M_0}$ yield the output $p_0,\ldots,p_r$, sampling one of them at random, 
using it to run $D_{\M_0}$ and outputting $p_{r+1}$.

Moreover, define $D_{\M_0}(p_0 \till p_r;x_1 \till x_k)$ to be the same 
distribution as $D_{\M_0}(p_0 \till p_r)$, however, we further conditioned that 
its final output $p_{r+1}$ satisfy $p_{r+1}=x_1,\ldots,x_k,y_1,\ldots,y_k$. 
That is, we condition the $x's$ and the distribution is only 
over the $y_i$'s. Finally, we define $D(p_0 \till
p_r)$ to be the same distribution as $D_{\M_0}(p_0 \till p_r)$ only where the
representation $\M_0$ is also chosen at random (according to $\BB_1(S)$).

We define an (inefficient) adversary $\PredictUnsteady$ in \Cref{fig:attack2} 
that
(given $\M_0$) can attack the Bloom filter, that is, find an element $x^*$
that was not queried before and is a false positive with high probability (the 
final adversary will be efficient).

\begin{figure}[!h]
	\begin{boxedminipage}{\textwidth}
		\small \medskip \noindent
		
		\vspace{3mm} \textbf{The Algorithm \PredictUnsteady}
		
		\vspace{3mm} \emph{Given}: The representation $\M_0$, states 
		$p_1,\ldots,p_r$ and oracle access to $Q(\cdot)$.
		
	\begin{enumerate}
		\item Sample $x_1 \till x_{k} \in U$ at random.
		\item For $i \in [\ell]$ sample $D_{\M_0}(p_0 \till p_r;x_1 \till
		x_k)$ to get $y_{i1} \till y_{ik}$.
		\item If there exists an index $j \in [k]$ such that for all $i \in
		[\ell]$ it holds that $y_{ij}=1$:
		\begin{enumerate}
			\item Set $x^*=x_j$.
			\item Query $Q(x_1) \till Q(x_{j-1})$.
			\item Output $x^*$ and halt.
		\end{enumerate}
		\item Otherwise set $x^*$ to be an arbitrary element in $U$.
	\end{enumerate}
		
		\vspace{1mm}
		
	\end{boxedminipage}
	
	\caption{The description of the algorithm $\PredictUnsteady$.}
	\label{fig:attack2}
\end{figure}
Set $k=160/\eps_0$ and $\ell=100k$. Then we get the following claims. First, we 
show that there will be a common element $x_j$ (that is, the condition in line 
3 will hold).
\begin{claim}\label{clm:1}
With probability $99/100$ there exist a $1 \leq j 
\leq k$ such that for all $i \in
[\ell]$
it holds that $y_{ij}=1$, where the probability is over the random choice of
$S$ and $x_1 \till x_k$.
\end{claim}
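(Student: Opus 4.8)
The plan is to reduce the claim to the statement that at least one of the randomly chosen elements $x_1 \till x_k$ is a \emph{hard-core positive} of $\M_0$. Recall the observation made just after the definition of hard-core positives: \Cref{clm:many_positives}, reinterpreted, shows that $\Pr_S[\exists r : \mu(\M^S_r) \le \eps_0/8] \le 2^{-n}$, where $\mu(\M_0)$ now denotes the set of hard-core positive elements of $\M_0$. In particular, with probability at least $1 - 2^{-n}$ over the choice of $S$, the set $\mu(\M_0)$ has density greater than $\eps_0/8$ in $U$; call such a set $S$ \emph{good}.

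Next I would record the monotonicity of $\mu(\cdot)$: if $x$ is a hard-core positive of $\M_0$, then it is a hard-core positive of every representation reachable from $\M_0$ by a sequence of queries, because any continuation of queries from a reachable state is itself a sequence of queries applied to $\M_0$. Consequently, suppose some $x_j \in \mu(\M_0)$. Every representation in the support of the starting state of the $(r{+}1)$-st activation of $D_{\M_0}$ conditioned on $p_0 \till p_r$ is reachable from $\M_0$, and the conditioning on the sampled elements being $x_1 \till x_k$ forces $x_j$ to be queried after $x_1 \till x_{j-1}$; since $x_j$ remains hard-core positive throughout, $Q(x_j)=1$ with probability $1$. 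Hence in every one of the $\ell$ samples we get $y_{ij}=1$, and the event of the claim holds. (The parameter $\ell$ plays no role in establishing this direction; it is used only for the complementary claims about elements that are \emph{not} hard-core positives.)

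It remains to lower bound the probability that some $x_j$ lands in $\mu(\M_0)$. Conditioned on $S$ being good, $x_1 \till x_k$ are independent uniform elements of $U$, each missing $\mu(\M_0)$ with probability at most $1 - \eps_0/8$, so
\[
\Pr\!\left[\,\forall j \in [k] : x_j \notin \mu(\M_0)\,\right] \;\le\; \paren{1 - \frac{\eps_0}{8}}^{\!k} \;\le\; e^{-\eps_0 k / 8} \;=\; e^{-20},
\]
using $k = 160/\eps_0$. A union bound with the event that $S$ is not good yields a total failure probability of at most $2^{-n} + e^{-20} < 1/100$ for all sufficiently large $n$, which is what the claim asserts.

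The step I expect to require the most care is the conditioning argument in the second paragraph: one must check that conditioning the $(r{+}1)$-st activation of $D_{\M_0}$ on the transcripts $p_0 \till p_r$ and on its sampled elements being exactly $x_1 \till x_k$ cannot destroy the deterministic ``always answer $1$'' behavior on a hard-core positive $x_j$. This is precisely what monotonicity of $\mu(\cdot)$ delivers, since every representation consistent with the conditioning is still reachable from $\M_0$ — but it should be spelled out rather than assumed.
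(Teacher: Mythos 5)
Your proof is correct and follows essentially the same route as the paper's: reduce the claim to the event that some $x_j$ lies in the set of hard-core positives of $\M_0$, invoke the reinterpreted \Cref{clm:many_positives} for the density lower bound, use monotonicity of the hard-core positive set to carry $y_{ij}=1$ through all $\ell$ samples, and union bound. (Minor note: the paper's proof uses the density bound $\eps_0/16$ rather than $\eps_0/8$, which looks like a harmless inconsistency with the statement of \Cref{clm:many_positives}; your $\eps_0/8$ and the resulting $e^{-20}$ are if anything more faithful to that claim.)
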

\begin{proof}
Let $\M_r$ be the resulting representation of the $\ith{r}$ activation of 
$D_{\M_0}(p_0
\till p_r;x_1 \till x_k)$. We have seen that with probability $1-2^{-n}$ over
the choice of $S$ for any $\M_0$ we have that the set of hard-core positives
satisfy $|\mu(\M_0)| \ge \eps_0/16$. By the definition of the hard-core
positives, the set $\mu(\M_0)$ may only grow after each query. Thus, for each
sample from $D_{\M_0}(p_0 \till p_r;x_1 \till x_k)$ we have that $\mu(\M_0)
\subseteq \mu(\M_r)$. If $x_j \in \mu(\M_0)$ then $x_j \in \mu(\M_r)$ and thus
$y_{ij}=1$ for all $i \in [\ell]$. The probability that all elements $x_1 \till
x_k$ are sampled outside the set $\mu(\M_0)$ is at most $(1-\eps_0/16)^k \le
e^{-10}$ (over the random choices of the elements). All together we get that
probability of choosing a `good' $S$ and a `good' sequence $x_1 \till x_t$ is 
at least $1-2^{-n}
+ e^{-10} \ge 99/100$.
\end{proof}

\begin{claim}\label{clm:2}
Let $\M_r$ be the underlying representation of the interface $Q(\cdot)$ at the
time right {\rm after} sampling $p_0 \till p_r$. Then, with probability at least $98/100$
the algorithm $\PredictUnsteady$ outputs an element $x^*$
such that $Q(x^*)=1$, where the probability is taken over the
randomness of $\PredictUnsteady$, the sampling of $p_0 \till p_r$, and $\BB$.
\end{claim}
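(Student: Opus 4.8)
The plan is to freeze all the randomness that is already determined at the moment $\PredictUnsteady$ commits to $x^*$, and then show that the single extra query the challenge performs behaves like a fresh, independent draw from exactly the distribution the algorithm has already sampled $\ell$ times. Concretely, I would condition on the set $S$, the initial representation $\M_0$, the observed transcript $p_0\till p_r$, and the elements $x_1\till x_k$ chosen in Step~1. Conditioned on this data, the \emph{actual} current state $\M_r$ of the interface $Q(\cdot)$ is a random variable distributed according to its posterior given $(\M_0, p_0\till p_r)$, and — by the very definition of $D_{\M_0}(p_0\till p_r;x_1\till x_k)$ as the enumeration over consistent random strings — each of the $\ell$ samples the algorithm draws independently re-samples a state from this \emph{same} posterior and runs $Q(x_1)\till Q(x_k)$ from it. Hence the $\ell$ response vectors $(y_{i1}\till y_{ik})_{i\in[\ell]}$ and the response vector that the real interface would produce on $x_1\till x_k$ are $\ell+1$ independent and identically distributed samples from one fixed distribution $D$ over $\B^k$, with the two groups independent of one another.

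I expect this distributional equivalence to be the delicate step, precisely because $Q(\cdot)$ mutates its state: one must check that running the real interface on $x_1,\dots,x_{j-1}$ in Step~3b and then on $x^*=x_j$ in the challenge produces, jointly, exactly the first $j$ coordinates of a draw from $D$, independently of the $\ell$ training vectors. Granting this, write $q_j=\Pr_{z\sim D}[z_j=1]$. The algorithm can output $x^*=x_j$ only if column $j$ of the $\ell\times k$ sample array is all-ones, an event of probability exactly $q_j^{\ell}$ (the $\ell$ rows are i.i.d.), and this is independent of the fresh real draw $z\sim D$ whose $j$-th coordinate equals $Q(x^*)$; hence
\begin{align*}
\Pr[\,\exists j:\ j \text{ is selected and } Q(x_j)=0\,]\ \le\ \sum_{j=1}^{k} q_j^{\ell}\,(1-q_j)\ \le\ k\cdot\max_{q\in[0,1]} q^{\ell}(1-q)\ \le\ \frac{k}{\ell+1}\ \le\ \frac{1}{100},
\end{align*}
using $\ell=100k$.

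It remains to rule out the case that \emph{no} column is all-ones, so that the ``if'' of Step~3 never triggers and $x^*$ is arbitrary. Here I would invoke Claim~\ref{clm:1}: with probability at least $99/100$ some $x_j$ lands in the hard-core positive set $\mu(\M_0)$, and such an $x_j$ forces $y_{ij}=1$ for every $i$ (a hard-core positive answers $1$ from every reachable state), so the ``if'' does trigger; moreover, on this good event, whatever index the algorithm selects, $x^*$ fails only if that selected index is ``bad'', an event already bounded by $1/100$ above. A union bound over these two failure modes gives $\Pr[Q(x^*)=1]\ge 1-1/100-1/100=98/100$, which is the claim. The only bookkeeping to be careful about is that the conditioning above is applied uniformly and then averaged out via the tower rule, and that the state-mutating real queries in Step~3b are correctly matched to the coordinates of $D$; the freshness of $x^*$ (that $x^*\notin S\cup\{\text{queried elements}\}$) is not needed for this claim and is deferred to the assembly of the full theorem.
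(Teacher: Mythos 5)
Your proof is correct and follows essentially the same route as the paper: both hinge on the observation that, conditioned on $\M_0$, $p_0\till p_r$ and $x_1\till x_k$, the true state $\M_r$ and the $\ell$ resampled states are i.i.d.\ from the same posterior, and then union-bound the "selected column is all-ones but the real answer is 0" event against Claim~\ref{clm:1}. The only cosmetic difference is that you compute $\sum_j q_j^\ell(1-q_j)\le k/(\ell+1)$ by maximizing $q^\ell(1-q)$, whereas the paper gets the per-column bound $1/\ell$ via the exchangeability device of sampling $\ell+1$ states and relabeling one at random as $\M_r$; both yield the required $1/100$.
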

\begin{proof}
Consider the distribution $D_{\M_0}(p_0 \till p_r;x_1 \till x_k)$ to work in
two phases: First, a representation $\M$ is sampled conditioned on starting
from $\M_0$ and outputting the states $p_0 \till p_r$ and then we compute
$y_j=\BB_2(\M,x_j)$. Let $\M'_1 \till \M'_{\ell}$ be the representations chosen
during the run of $\PredictUnsteady$. Note that $M_r$ is chosen from the same
distribution that $\M'_1 \till \M'_{\ell}$ are sampled from. Thus, we can
think of $\M_r$ as being picked after the choice of
$x_1 \till x_k$. That is, we sample $\M'_1 \till \M'_{\ell+1}$, and choose one
of them at random to be $\M_r$, and the rest are relabeled as $\M'_1 \till
\M'_{\ell}$. Now, for any $x_j$, the probability that for all $i$, $\M'_i$
will answer `1' on $x_j$ but $\M_r$ will answer `0' on $x_j$ is at most
$1/\ell$. Thus, the probability that there exist any such $x_j$ is at most
$\frac{k}{\ell} = \frac{k}{100k} = 1/100$. Altogether, the probability that 
$A$ finds such an $x_j$
that is always labeled `1' and that $\M_r$ answers `1' on it, is at least
$99/100-1/100=98/100$.
\end{proof}

Finally, we claim that $x^*$ is truly a false positive, that is $x^* \notin S$ 
and that $x^*$ is not one of the previous $t$ queries. By our bound on $t$ we 
get that $n+t = O(m/\eps_0^2)$. Since the Bloom filter is non-trivial, we get 
that $u > am/\eps_0^2$ for any constant $a$. Therefore, the probability that 
$x^*$ is not a new false positive is negligible.

We are left to show how to construct the algorithm $\PredictUnsteady$ so that 
it will run
in polynomial-time and perform the same tasks {\em without} knowing $\M_0$. 
Note that our only use of $\M_0$ was to sample from $D_{\M_0}(p_0 \till 
p_r;x_1 \till x_k)$ without changing it (since it changes after each sample). 
The goal is to observe outputs of this distribution until we have enough 
information about it, such that we can simulate it without changing its state. 
One difficulty (which was discussed in
\Cref{sec:tech}), is that the number of samples $r$ must be chosen
as a function of the samples and cannot be
fixed in advance. Algorithms for such tasks were studied in the framework of 
Naor
and Rothblum~\cite{NaorR06} on adaptively changing distributions.

\subsection{Using ACDs}\label{DynamicFullProof}
We continue the full proof of \Cref{thm:no-owf}. First, we give an overview of
the framework of adaptively changing distributions of Naor and Rothblum
\cite{NaorR06}.

\paragraph{Adaptively Changing Distributions.}
An adaptively changing distribution (ACD) is composed of a pair of
probabilistic algorithms, one for generating ($G$) and for
sampling ($D$). The generation algorithm $G$ receives a public input $x \in
\B^n$ and outputs an initial secret state $s_0 \in \B^{s(n)}$ and a public
state
$p_0$:
\begin{align*}
	G \colon R \to S_p \times S_{init},
\end{align*}
where the set of
possible public states is denoted by $S_p$ and the set of secret states is
denoted by $S_s$.
After running the generation algorithm, we can consecutively activate a
sampling algorithm $D$ to generate samples from the adaptively changing
distribution. In each activation, $D$ receives as its input a pair of secret
and public states, and outputs new secret and public states.  Each new state is 
always a function of the current state and
some randomness, which we assume is taken from a set $R$. The states generated
by an ACD are determined by the function
\begin{align*}
D \colon S_p \times S_s \times R \to S_p \times S_s.
\end{align*}
When $D$ is activated for the first time, it is run on
the initial public and secret states $p_0$ and $s_0$ respectively, generated
by $G$. The public output of the process is a sequence of public states $(x,
p_0, p_1, \ldots)$.

\paragraph{Learning ACDs:}
An algorithm $L$ for learning an ACD $(G,D)$ sees $x$ and $p_0$ (which were
generated,
together with $s_0$, by running $G$), and is then allowed to observe $D$ in
consecutive activations, seeing only the public states $D$ outputs. The
learning algorithm's goal is to output a hypothesis $h$ on the initial secret
state that is functionally equivalent to $s_0$ for the next
activation of $D$. The requirement is that with probability
at least $1 - \delta$ (over the random coins of $G$, $D$, and the learning
algorithm), the distribution of the next public state,
given the past public states and that $h$ was the initial secret output of $G$,
is $\eps$-close to the same distribution with $s_0$ as the initial secret
state (the ``real'' distribution of $D$'s next public state).
Throughout the learning process, the sampling algorithm $D$ is run
consecutively, changing the public (and secret) state. Let $p_i$ and $s_i$ be
the public secret states after $D$'s $\ith{i}$ activation. We refer to the
distribution $D^{s_0}_i(x,p_0,\dots,p_i)$ as the distribution on the public
state
that will be generated by $D$'s next $\ith{(i+1)}$ activation.

After letting $D$ run for (at most) $r$ steps, $L$ should stop and output some
hypothesis $h$ that can be used to generate a distribution that is close to
the distribution of $D$'s next public output. We emphasize that $L$ sees
only $(x, p_0, p_1, \ldots,p_r)$, while the
secret states $(s_0, s_1,\ldots,s_r)$ and the random coins used by $D$ are
kept hidden from it. The number of times $D$ is allowed to run ($r$) is
determined by the
learning algorithm. We say that $L$ is an $(\eps, \delta)$-learning
algorithm for $(G, D)$, that uses $r$ rounds, if when run in a learning
process for $(G, D)$, $L$ always (for any input $x \in \B^n$) halts and
outputs some hypothesis $h$ that specifies a hypothesis distribution $D_h$,
such that with probability $1-\delta$ it holds
that $\Delta(D^{s_0}_{r+1},D_h) \le \eps$, where $\Delta$ is the statistical
distance between the distributions (see Definition 
\ref{def:statisticaldistance}).

We say that an ACD is hard to $(\eps,\delta)$-learn with $r$ samples if no
efficient learning algorithm can $(\eps,\delta)$-learn the ACD using $r$
rounds. The main result regarding ACDs that we use is an equivalence between 
hard-to-learn ACDs
and almost one-way functions (where an almost one-way function is a function that
are hard to invert for an infinite number of sizes, see 
Definition \ref{def:almostOWF}.)

\begin{theorem}[\cite{NaorR06}]\label{thm:acd}
Almost one-way functions exist if and only if there exists an adaptively
changing distribution $(G,D)$ and polynomials $\delta(n)$, $\eps(n)$, such
that it is hard to $(\eps(n), \delta(n))$-learn the ACD $(G,D)$ with
$O\paren{\frac{\log{ |S_{init}|}}{\eps^2(n)\delta^4(n)}}$ samples.
\end{theorem}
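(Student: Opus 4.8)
The plan is to establish both implications, with the substantive work in the direction whose contrapositive is used later: assuming that no almost one-way function exists, every ACD $(G,D)$ is $(\delta,\gamma)$-learnable using $O\paren{\log|S_{init}|/(\delta^2\gamma^4)}$ samples. Fix an ACD $(G,D)$. For each $i$, let $f_i$ be the function that, given the coins of $G$ together with the coins consumed by $D$ in its first $i$ activations, outputs the public transcript $(x,p_0,p_1\till p_i)$; bundle these into one function $f$ of $(i,\text{coins})$, padded to a common length. Since $f$ is not almost one-way (see Definition \ref{def:almostOWF}), for every polynomial there is a probabilistic polynomial-time algorithm that, on all sufficiently large inputs, recovers with that probability some preimage of $f$ --- equivalently, coins of $G$, hence a candidate initial secret state $s_0$, consistent with any observed public transcript. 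Moreover, since in particular one-way functions do not exist, the Impagliazzo--Luby machinery upgrades this to a \emph{distributional} inverter returning a near-uniform preimage, so the learner can sample a secret state from a distribution statistically close to the true conditional distribution of secret states consistent with the transcript seen so far; making this sampled hypothesis faithful to that conditional distribution is the crux, and it is what forces the extra $\delta$ and $\gamma$ factors.

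The learning algorithm $L$ observes the public transcript and, after each activation, tests --- using only public information and the distributional inverter --- whether the freshly revealed public state $p_i$ was an event of probability below roughly $\gamma$ under the current conditional-on-transcript distribution of secret states. The key structural fact is an ``Occam's razor for adaptively changing distributions'': such a low-probability (``surprising'') observation shrinks the effective support of the consistent-secrets distribution by a fixed amount, and since the initial secret state carries only $\log|S_{init}|$ bits of information, at most $O\paren{\log|S_{init}|/\gamma}$ surprises can occur; the sampling accuracy needed to detect surprises reliably with confidence $\delta$ contributes the remaining factors, giving $r=O\paren{\log|S_{init}|/(\delta^2\gamma^4)}$. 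Once $L$ reaches a round after which its current hypothesis would not be surprised, it halts and outputs a consistent $s_0$ (sampled from the conditional distribution) as its hypothesis $h$. A consistent hypothesis whose induced next-public-state distribution were more than $\gamma$-far from the real one would, by the definition of statistical distance (Definition \ref{def:statisticaldistance}), have had probability at least $\gamma$ of being surprised on the next activation; hence with probability $1-\delta$ the output satisfies $\Delta(D^{s_0}_{r+1},D_h)\le\gamma$, exactly the required guarantee. Efficiency is immediate, as each step is a polynomial-time call to the inverter. I expect the main obstacle to be controlling the accumulation of statistical error across the adaptively many rounds while keeping the hypothesis faithful to the true conditional distribution --- this is precisely where the $\delta^2\gamma^4$ dependence is paid, and it forces care about the (infinitely many) input lengths on which the inverter is guaranteed to succeed.

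For the converse --- that an almost one-way function $f$ yields a hard-to-learn ACD --- I would let $G$ sample $x\in\B^n$, set the secret state $s_0=x$ and public state $p_0=f(x)$, and let $D$ on each activation draw a fresh $r\in\B^n$ and output the new public state $(r,\langle x,r\rangle)$ (inner product mod $2$), leaving the secret state unchanged. An efficient $(\delta,\gamma)$-learner with small $\gamma$ would, after its rounds, output a hypothesis $h$ whose induced next-public-state distribution is $\gamma$-close to the distribution of $(r,\langle x,r\rangle)$ for uniform $r$; averaging over the mixture $h$ and invoking the Goldreich--Levin list-decoding argument then recovers $x$, i.e.\ inverts $f$, with noticeable probability on the (infinitely many) input lengths where learning succeeds, contradicting that $f$ is almost one-way. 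Matching parameters is routine: here $\log|S_{init}|=\Theta(n)$ (plus a counter in the secret state), so the permitted sample bound $O\paren{\log|S_{init}|/(\delta^2\gamma^4)}$ is polynomial, and the ACD is hard for \emph{any} polynomial number of rounds since the learning task, although eventually information-theoretically determined, is computationally as hard as inverting $f$.
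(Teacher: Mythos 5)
This theorem is imported wholesale from Naor and Rothblum~\cite{NaorR06} --- the bracketed citation in the header means the paper states it but does not prove it, so there is no in-paper argument to compare against. Assessed on its own merits, your forward sketch (no almost one-way functions implies every ACD is efficiently learnable) is in the right spirit: defining $f$ to map the coins of $(G,D)$ to the public transcript, invoking Impagliazzo--Luby to upgrade an ordinary inverter to a distributional one, and bounding the number of ``surprising'' rounds by a weight/entropy argument are all ingredients of the Naor--Rothblum development, though the precise bookkeeping behind the $\delta^2\gamma^4$ denominator would have to be checked against the original.

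Your converse construction has a concrete flaw. You set $s_0=x$, $p_0=f(x)$, and let each activation of $D$ publish a fresh pair $(r,\langle x,r\rangle)$. But after $t$ activations the learner holds $t$ random linear equations $\langle x,r_i\rangle = b_i$ over $\F_2^n$, and the theorem's sample budget $O\paren{\log|S_{init}|/(\delta^2\gamma^4)}=\Omega(n)$ is more than enough for the $r_i$ to span $\F_2^n$ with overwhelming probability; the learner then recovers $x$ exactly by Gaussian elimination, never touching $f(x)$ or Goldreich--Levin, and outputs the true secret state as its hypothesis. The problem is structural: publishing $(r,\langle x,r\rangle)$ leaks one bit of \emph{linear} information about the fixed secret each round, so a hard ACD cannot afford to do this. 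Goldreich--Levin shows that a \emph{single} pair $(r,\langle x,r\rangle)$ is unpredictable given only $f(x)$, but that guarantee evaporates once $\Omega(n)$ such pairs are revealed. The hard construction in \cite{NaorR06} instead routes the secret through a pseudorandom object (a generator or function built from the one-way function by the standard chain), so that each published value looks uniformly random and accumulating many of them gives the learner no additional purchase on $s_0$.
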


The consequence of \Cref{thm:acd} is that if we assume that one-way functions
do not exist, then no ACD is hard to learn.  For concreteness, we get that,
given an ACD there exists an algorithm $L$ that (for infinitely many input
sizes) with probability at least
$1-\delta$ performs at most $O\paren{\log{ |S_{init}|}}$ samples and produces
an hypothesis $h$ on the initial state and a distribution $D_h$ such
that the statistical distance between $D_h$ and the next activation of the ACD
is at most $\eps$. The point is that $D_h$ can be (approximately) sampled in
polynomial-time.

\paragraph{From Bloom Filters to ACDs.}
As one can see, the process of sampling from $D_{\M_0}$ is equivalent to
sampling from an ACD defined by $D_{\M_0}$. The secret states are the
underlying representations of the Bloom filter, and the initial secret state
is $\M_0$. The public states are the outputs of the sampling. Since the Bloom
filter uses at most $m$ bits of memory for each representation we have that
$|S_{init}| \le 2^m$.

Running the algorithm $L$ on the ACD constructed above, will output a
hypothesis $\M_h$ of the initial representation such that the distribution
$D_{\M_h}(p_0 \till p_r)$ is close (in statistical distance) to the
distribution $D_{\M_0}(p_0 \till p_r)$. The algorithm's main
goal is to estimate whether the weight (\ie the probability) of representations 
$\M$ 
according to $D(p_0 \till p_i)$ such that $D(\M,p_0 \till p_i)$ is
close to $D(\M_0;p_0 \till p_i)$ is high, and then samples such a
representation. The main difficulty of their work is showing that if almost
one-way functions do not exist then this estimation of sampling procedures can
be implemented efficiently. We slightly modify the algorithm $L$ to output 
several such
hypotheses instead of only one. The overview of the
modified algorithm is given below (the only difference from the original 
algorithm is the number of output hypotheses):
\begin{enumerate}
\item For $i \leftarrow 1 \ldots r$ do:
	\begin{enumerate}
	\item Estimate whether the weight of representations $\M$ according to
	$D(p_0 \till p_i)$ such that $D_{\M}(p_0 \till p_i)$ is close to
	$D_{\M_0}(p_0 \till p_i)$ is high. Namely, estimate if there is a subset 
	$\mathcal{M} \subset S_{init}$ such that $\Pr_{D(p_0 \till 
	p_i)}[\mathcal{M}]$ is high and for all 
	$M \in \mathcal{M}$ it holds that $\Delta(D_{\M}(p_0 \till p_i), 
	D_{\M_0}(p_0 
	\till p_i))$ is small.
	
	If the weight estimate is high then
	(approximately) sample $h_1 \till h_{\ell} \leftarrow S_{init}$ according 
	to $D(p_0 \till p_i)$,
	output $h_1 \till h_{\ell}$, and terminate.
	\item Activate $D_{\M_i}$ to sample $p_{i+1}$ and proceed to
	round $i+1$.
	\end{enumerate}
\item Output arbitrarily some $\M$ and terminate.
\end{enumerate}

We run the modified algorithm $L$ on the ACD defined by $D_{\M_0}$ with parameters
$\gamma=\frac{1}{100\ell}$ and $\delta=1/100$ (recall that $k=160/\eps_0$ 
and $\ell=100k$). The output is $h_1 \till h_{\ell}$ with the property that 
with probability at least $1-1/100$ for every $i \in [\ell]$ it holds that
$$\Delta\paren{D_{\M_0}(p_0 \till p_r),D_{h_i}(p_0 \till p_r)} \le 1/100.$$
We 
modify the
algorithm $\PredictUnsteady$ such that the $\ith{i}$ sample from $D_{\M}(p_0 
\till p_r;x_1 \till x_k)$ is replaced with a sample from $D_{h_i}(p_0 
\till p_r;x_1 \till
x_k)$. We have shown that the algorithm $\PredictUnsteady$ succeeds given samples from
$D_{\M_0}(p_0 \till p_r;x_1 \till x_k)$. However, now it is given samples
from distributions that are only `close' to $D_{\M_0}(p_0 \till p_r;x_1 \till 
x_k)$.

Consider these two cases where in one we sample from $D_{\M_0}(p_0 \till p_r)$ 
and in the other we sample from $D_{h_i}(p_0 
\till p_r)$. Each one defines a different distribution, denoted $D^1$ and
$D^2$, respectively. The distribution $D^1$ is defined by sampling $x_1 \till
x_k$ and then sampling $\ell$ times from $D_{\M_0}(p_0 \till p_r;x_1 \till 
x_k)$, and
the distribution $D^2$ is defined by sampling $x_1 \till x_k$ and then
sampling from $D_{h_i}(p_0 
\till p_r;x_1 \till x_k)$ for each $i \in [\ell]$.

Since
$\Delta(D_{\M_0}(p_0 \till p_r),D_{h_i}(p_0 
\till p_r) \le \gamma = \frac{1}{100\ell}$
for any $i \in
[\ell]$, by the triangle inequality we get that $\Delta(D^1,D^2) \le \ell\gamma 
=
1/100$. Let $\BAD$ be the event the algorithm $\PredictUnsteady$
does not succeed in finding an appropriate $x^*$. We have shown that under the
first distribution $\Pr_{D^1}[\BAD] \le 2/100$. Moreover, we have shown that with probability $99/100$ we can find a distribution $D^2$ such that $\Delta(D^1,D^2)\le 1/100$. Taking a union bound, we get that the
probability of the event $\BAD$ under the distribution $D^2$ is
$\Pr_{D^2}[\BAD] \le 2/100 + 1/100 + 1/100 \le 1/3$, as required.

The number of rounds performed by $L$ is $O(m/\gamma)$ and each round we
perform $k$ queries. Thus, the total amount of queries is
$O(mk/\gamma)=O(m/\eps_0^2)$. As shown before, the probability that $x^*$ is 
not a new false positive is negligible.

\subsection{A Construction Using Pseudorandom Permutations.}
\label{sec:sufficient}
We have seen that Bloom filters that are adversarial resilient require using
one-way functions. To complete the equivalence, we show that cryptographic tools and in particular pseudorandom
permutations and functions can be used 
to construct adversarial resilient Bloom filters.
Actually, we show that any Bloom filter can be
efficiently  transformed to be adversarial resilient with essentially the same
amount of memory. The idea is simple and can work in
general for other data structures as well: On any input $x$ we compute a
pseudorandom permutation of $x$ and send it to the original Bloom filter. Recall that a pseudorandom permutation (PRP) is a keyed family of permutations that a random function from the family is indistinguishable from a truly random permutation given only oracle access to the function.

\begin{theorem}\label{thm:sufficient}
Let $\BB$ be an $(n,\eps)$-Bloom filter using $m$ bits of memory. If
pseudorandom permutations exist, then there exists a negligible function $\negl(\cdot)$ such that for security parameter $\lambda$
there exists an
$(n,\eps+\negl(\lambda))$-strongly resilient Bloom filter that uses
$m'=m+\lambda$ bits of memory.
\end{theorem}
\begin{proof}
The main idea is to randomize the adversary's queries
by applying a pseudorandom permutation (see Definition \ref{def:pseudorandomP}) on them; then we 
may consider the queries as random and
not as chosen adaptively by the adversary.

Let $\BB$ be an $(n,\eps)$-Bloom filter using $m$ bits of memory. We will
construct a $(n,\eps + \negl(\lambda))$-strongly resilient Bloom filter
$\BB'$ as follows: To initialize $\BB'$ on a set $S$ we first choose a key $K
\in \B^\lambda$ for a pseudo-random permutation, $\PRP$, over $\B^{\log{u}}$. Let
$$S'=\PRP_K(S)= \{\PRP_K(x) : x \in S\}.$$
Then we initialize $\BB$ with
$S'$. For the query algorithm, on input $x$ we output $\BB(\PRP_K(x))$. Notice
that the only additional memory we need is storing the key $K$ of the $\PRP$ which takes $\lambda$ bits. Moreover, the running time of the query algorithm of $\BB'$ is one pseudo-random permutation more than the query time of $\BB$.

The completeness follows immediately from the completeness of $\BB$. If $x \in
S$ then $\BB$ was initialized with $\PRP_K(x)$ and thus when querying on $x$ we 
will query $\BB$ on $\PRP_K(x)$ which will return `1' from the
completeness of $\BB$.

The resilience of the construction follows from a hybrid argument. Let
$A$ be an adversary that queries $\BB'$ on $x_1 \till x_t$ and outputs $x$ where $x \notin \{x_1 \till x_t\}$. Consider the experiment where the $\PRP$ is
replaced with a truly random permutation oracle $R(\cdot)$. Then, since $x$
was not
queried before, we know that $R(x)$ is a truly random element that was not
queried before, and we can
think of it as chosen before the initialization of $\BB$. From the soundness
of $\BB$ we get that the probability of $x$ being a false positive is at most
$\eps$.

We show that $A$ cannot distinguish between the Bloom filter we constructed
and our experiment (using a random permutation) by more than a negligible advantage.
Suppose that there exists a
polynomial $p(\lambda)$ such that $A$ can attack $\BB'$ and find a false
positive with probability $\eps+\frac{1}{p(\lambda)}$. We will show that we
can use $A$ to
construct an algorithm $A_2$ that can distinguish between a random oracle and a 
$\PRP$ with non-negligible probability. Run
$A$ on $\BB'$ where the $\PRP$ is replaced with an oracle that is either
random or pseudo-random. Answer `1' if $A$ successfully finds a false
positive. Then, we have that
\begin{align*}
\abs{\Pr[A_2^R(1^\lambda)] - \Pr[A_2^{\PRP}(\lambda)]} \ge \abs{\eps - \eps +
\frac{1}{p(\lambda)}} = \frac{1}{p(\lambda)}
\end{align*}
which contradicts the indistinguishability of the $\PRP$ family.

\end{proof}

\paragraph{Constructing Pseudorandom Permutations from One-way Functions.}
Pseudorandom permutations have several constructions from one-way functions
which target different domains sizes.
If the universe is large, one can obtain pseudorandom permutations from
pseudorandom functions using the famed Luby-Rackoff
construction from pseudorandom functions~\cite{LubyR88,NaorR99},
which in turn can be based on one-way functions.

For smaller domain sizes ($u=|U|$ quite close to $n$)  or domains that are not
a power of two the problem is a bit more complicated.
There has been much attention given to this issue in recent years and Morris and
Rogaway~\cite{MorrisR14}
presented a construction that is computable in expected $O(\log u)$ 
applications of a pseudorandom function.
Alternatively,
Stefanov and Shi~\cite{StefanovS12} give a construction of small domain  
pseudorandom permutations, where the asymptotic cost is worse 
($\sqrt{u}\log{u}$), however, in practice, it performs very well in common use 
cases (\eg when $u \le 2^{32}$).

The reason we use permutations is to avoid the event of an element $x
\not\in S$ colliding with
the set $S$ on the pseudorandom function.
If one replaces the pseudorandom permutation with a pseudorandom function,
then a term of $n/u$, a bound on the probability of  this collision, must be added to the false
positive error rate;
other than that  the analysis is the same as in the permutation case.
So unless $u$ is close $n$ this additional error might be tolerated,
and it is possible to replace the use of the pseudorandom permutation with a
pseudorandom function.

\begin{remark}[Alternatives]
The transformation suggested does not interfere with the internal operation of
the Bloom filter implementation and might preserve other properties of the 
Bloom filter as well. For example, it is applicable when the set is not given 
in advance but is provided dynamically among the membership queries, and even 
in the case where the size of the set is not known in advance (as 
in~\cite{PaghSW13}).

An alternative approach is to replace the hash functions used in `traditional'
Bloom filter constructions or those in~\cite{PaghPR05,ArbitmanNS10}
with pseudorandom functions. The potential advantage of doing the analysis per
construction is that we may save on the computation. Since many constructions 
use several hash functions, it might be possible to use the result of
a single pseudorandom evaluation to get all the randomness required for all of 
the hash functions at once.

Notice that, in all the above constructions only the pseudorandom 
function (permutation) key must 
remain secret. That is, we get the same security even when the adversary gets 
the entire memory of the Bloom filter except for the PRF (PRP) key.
\end{remark}
\section{Computationally Unbounded Adversary}

In this section, we extend the discussion of adversarial
resilient Bloom filters to ones against computationally \emph{unbounded}
adversaries. First, notice that the attack of \Cref{thm:no-owf} holds in
this case as well, since an unbounded adversary can invert any function (with
probability 1). Formally, we get the following corollary:
\begin{corollary}\label{col}
Let $\BB=(\BB_1,\BB_2)$ be any non-trivial Bloom filter of $n$ elements that
uses $m$ bits of
memory and let $\eps_0$ be the minimal error of $\BB$. Then for any constant 
$\eps < 1$ there exist a $t$ such that $\BB$ is not $(n,t,\eps)$-adversarial 
resilient against unbounded adversaries and $t=O\paren{\frac{m}{\eps_0^2}}$.
\end{corollary}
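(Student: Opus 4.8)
The plan is to prove \Cref{col} by observing that the proof of \Cref{thm:no-owf} never genuinely required one-way functions to be absent: the hypothesis ``no OWFs'' was invoked solely to make the adversary's \emph{inversion} subroutine (steady case) or \emph{ACD-learning} subroutine (unsteady case) run in polynomial time. A computationally unbounded adversary can carry out both tasks unconditionally, and in fact \emph{perfectly}, which only simplifies the analysis and removes the ``infinitely many input lengths'' caveat.

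Concretely, for the steady case I would re-run the attack $\Predict$ of \Cref{fig:predict_algorithm} verbatim, with one change: Step~2 (``run the inverter $A$ of $f$'') is replaced by exhaustive search — having observed the labels $y_1 \till y_t$ on the random queries $x_1 \till x_t$, the adversary enumerates all pairs $(S',r')$ and outputs one with $\BB_2(\BB_1(S';r'),x_i)=y_i$ for every $i\in[t]$. Such a pair always exists (the true $(S,r)$ is one), so this step succeeds with probability $1$, which eliminates the $1/100$ loss from imperfect inversion. Everything downstream is unchanged: the Occam's-razor claim still shows that any consistent $\M'$ has $\err(\M,\M')\le \eps_0/100$ except with probability $1/100$; \Cref{clm:many_positives} still gives $\mu(\M)\ge\eps_0/8$ for all but a $2^{-n}$ fraction of sets; and \Cref{clm1} then finds an unqueried false positive of $\M'$ (by sampling or by exhaustive search over $U$), which is a false positive of $\M$ except with probability $1/100$. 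Hence $\Predict$ outputs a false positive with probability $>2/3$ using $t=O(m/\eps_0)$ queries, contradicting $(n,t,\eps)$-resilience for any constant $\eps<1$ (adjusting constants for $\eps>2/3$ exactly as in the proof of \Cref{thm:no-owf}, which is only easier here since several error terms now vanish).

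For the unsteady case I would similarly re-run $\PredictUnsteady$ but realize its ACD-learning step unconditionally: an unbounded learner can compute the posterior distributions $D_{\M_0}(p_0 \till p_i)$ and $D(p_0 \till p_i)$ exactly by enumerating the consistent coin sequences, estimate the relevant weights exactly, and sample $h_1 \till h_\ell$ from them — so the modified learning procedure of \Cref{DynamicFullProof} is implemented with $\gamma=1/(100\ell)$ and with its failure probability $\delta$ as small as desired (even $0$), using $O(m/\gamma)$ rounds of $k$ queries each. Feeding these $h_i$'s into $\PredictUnsteady$, \Cref{clm:1} and \Cref{clm:2}, together with the hybrid bound $\Delta(D^1,D^2)\le\ell\gamma=1/100$, still yield a false positive with probability $\ge 2/3$, now using $t=O(mk/\gamma)=O(m/\eps_0^2)$ queries. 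Since $\eps_0\le 1$ we have $m/\eps_0\le m/\eps_0^2$, so the single bound $t=O(m/\eps_0^2)$ covers both the steady and the unsteady cases, which is exactly the statement of \Cref{col}.

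The one point to verify carefully — the only real ``obstacle'' — is that ``unbounded'' genuinely suffices to bypass \emph{every} use of the no-OWF hypothesis. In the steady case this is immediate (finite search over representations and over $U$). In the unsteady case one must confirm that sampling a uniformly random element of a conditioning event (a uniformly random coin sequence of $\BB$ consistent with the observed transcript $(p_0 \till p_r)$), together with the weight estimates used inside the Naor–Rothblum procedure, are all computable given unbounded time and only the public transcript; this is the analogue of the Impagliazzo–Luby step, which an unbounded algorithm performs trivially by brute-force enumeration. Once that is granted, the conclusion of \Cref{thm:no-owf} holds with the complexity assumption removed and with ``for infinitely many input lengths'' upgraded to ``for all input lengths,'' establishing the corollary.
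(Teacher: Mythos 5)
Your proposal is correct and follows essentially the same route as the paper, which dispatches \Cref{col} with the single observation that ``the attack of \Cref{thm:no-owf} holds in this case as well, since an unbounded adversary can invert any function (with probability~1).'' You merely spell out what the paper leaves implicit: that in the steady case exhaustive search replaces the inverter $A$ with perfect success, and that in the unsteady case the Naor--Rothblum learning procedure (sampling consistent coin sequences and estimating posterior weights) is unconditionally realizable by brute-force enumeration given only the public transcript, with the $t=O(m/\eps_0^2)$ bound subsuming the steady case's $O(m/\eps_0)$.
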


As we saw, any $(n,\eps)$-Bloom filter must use at least $n\leps$ bits of 
memory. We show how to construct Bloom Filters that are resilient against 
\emph{unbounded} adversaries for $t$ queries while using only 
$O\paren{n\leps + t}$ bits of memory.

\begin{theorem}\label{thm:construction}
For any $n,t \in \N$, and $0 < \eps < 1/2$ there exists an 
$(n,t,\eps)$-resilient
Bloom filter (against unbounded adversaries) that uses $O(n\leps + t)$ bits of 
memory and has linear setup time and $O(1)$ 
worst-case query time.
\end{theorem}

Our construction uses two main ingredients: Cuckoo hashing
and a very high independence hash family $G$. We begin by describing these 
ingredients.

\paragraph{The Hash Function Family  $G$.}
Pagh and Pagh~\cite{PaghP08} and Dietzfelbinger and
Woelfel~\cite{DietzfelbingerW03} (see also 
Aum{\"{u}}ller~\etalcite{AumullerDW14})
showed how to construct a family $G$ of hash functions $g: U \to
V$ so that on any set of $k$ inputs it behaves like a truly random function
with
high probability ($1-1/\poly(k)$). Furthermore, $g$ can be evaluated in
constant time (in the RAM model), and
its description can be stored using $(1+\alpha)k\log{|V|} + O(k)$ bits (where
here $\alpha$ is
an arbitrarily small constant).

Note that the guarantee of $g$ acting as a
random function holds for any set $S$ that is \emph{chosen in advance}.
In our case the set is not chosen in advance but chosen adaptively and
adversarially.
However, Berman \etalcite{BermanHKN13} showed that the same line of constructions, starting with
Pagh and Pagh, actually holds  {\em even when the set of queries is chosen
adaptively}. That is, for any distinguisher that can adaptively choose $k$
inputs, the advantage of distinguishing a function $g \in_R G$ from a truly
random function is polynomially
small\footnote{Any exactly $k$-wise independent function is also good against
$k$ adaptive queries, but this is not necessarily the case for {\em almost}
$k$-wise.}.

Set $\ell=4\leps$. Our function $g$ will be composed of the concatenation of
$\ell$ one bit functions $g_1, g_2, \ldots, g_{\ell}$ where each $g_i$ is
selected independently from a family $G$ where $V=\B$ and $k=2t/\leps$.
For a random $g_i \in_R G$:
\begin{itemize}
\item There is a constant $c$ (which we can choose) so that for any adaptive
distinguisher that issues
a sequence of $k$ adaptive queries on $g_i$ the advantage of distinguishing
between $g_i$ and an exact  $k$-wise independent function $U \to V$ is
bounded by $\frac{1}{k^c}$.
\item $g_i$ can be represented using $O(k)$ bits.
\item $g_i$ can be evaluated in constant time.
\end{itemize}

Thus, the representation of $g$ requires $O(\ell \cdot k) = O(t)$ bits. The 
evaluation of 
$g$ at
a given point $x$ takes  $O(\ell)=O\paren{\leps}$ time.

\paragraph{Cuckoo Hashing.}
Cuckoo hashing is a data structure for dictionaries introduced by Pagh and
Rodler
\cite{PaghR04}. It consists of two tables $T_1$ and $T_2$, each containing $r$
cells where $r$ is slightly larger than $n$ (that is, $r=(1+\alpha)n$ for some
small
constant
$\alpha$) and
two hash functions $h_1,h_2:U \to [r]$. The elements are
stored in the two tables so that an element $x$ resides at either
$T_1[h_1(x)]$ or $T_2[h_2(x)]$. Thus, the lookup procedure consists of one
memory accesses to each table plus computing the hash
functions. This description ignores insertions, where insertions can be 
performed in expected constant time (or alternatively, all $n$ elements can be 
inserted in linear time with high probability).

Our construction of an adversarial resilient Bloom filter is:
\paragraph{Setup.}
The input is a set $S$ of size $n$. Sample a function $g$ by sampling $\ell$
functions $g_i \in_R G$ and
initialize a Cuckoo hashing dictionary $D$ of size $n$ (with $\alpha=0.1$) as
described above. That is, $D$ has two tables $T_1$ and $T_2$ each of size
$1.1n$, two hash
functions $h_1$ and $h_2$, and each element $x$ will reside at either
$T_1[h_1(x)]$ or $T_2[h_2(x)]$. Insert the elements of $S$ into $D$.
Then, go over the two tables $T_1$ and $T_2$ and at
each cell replace each $x$ with $g(x)$. That is, now for each $x \in S$ we
have that $g(x)$ resides at either $T_1[h_1(x)]$ or $T_2[h_2(x)]$. Put
$\bot$
in the empty locations. The final
memory of the Bloom Filter is the memory of $D$ and the representation of $g$.
The dictionary $D$ consists of $O(n)$ cells, each of size $|g(x)|=O(\leps)$
bits and
therefore $D$ and $g$ together can be represented by $O(n\leps +t)$ bits.

\paragraph{Lookup.}
On input $x$, answer `1' if either $T_1[h_1(x)] = g(x)$ or
$T_2[h_2(x)]=g(x)$, and `0' otherwise.

\begin{theorem}
Let $\BB$ be a Bloom filter as constructed above. Then for any constant $0 < 
\eps < 1/2$, $\BB$ is an
$(n,t,\eps)$-resilient Bloom filter
against unbounded adversaries, that uses $m$ bits of memory where
$m=O(n\leps + t)$.
\end{theorem}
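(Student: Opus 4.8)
The plan is to verify three things in turn: the memory bound, completeness, and $(n,t,\eps)$-resilience. The memory bound is a direct count: the cuckoo dictionary $D$ has $O(n)$ cells, each holding an $\ell=4\leps$-bit string or $\bot$, for $O(n\leps)$ bits; and $g$ is stored as $\ell$ descriptions from the family $G$, instantiated for $k=\Theta\paren{n+t/\leps}$ inputs (large enough for a single $g_i$ to absorb all of $S$ together with its share of the query load), at a cost of $O(\ell k)=O(n\leps+t)$ bits. Hence $m=O(n\leps+t)$. Completeness is inherited from cuckoo hashing: conditioning on $h_1,h_2$ succeeding for the fixed set $S$ — which fails with only small probability, and which we may drive below $\eps/4$ by using sufficiently independent $h_1,h_2$ (or resampling), the near-trivial regime of tiny $\eps$ aside — every $x\in S$ sits in $T_1[h_1(x)]$ or $T_2[h_2(x)]$, that cell holds $g(x)$, and the bit-by-bit comparison of $g(x)$ against it matches in all $\ell$ coordinates, so the lookup answers `Yes'. (The cyclic pointer makes the representation mildly unsteady, but it is $O(\log\ell)$ bits and irrelevant to correctness.)

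The heart of the proof is resilience. Fix an adaptive adversary $A$ making queries $x_1\till x_t$ and then outputting $x^*\notin S\cup\{x_1\till x_t\}$. Recall that the lookup examines the coordinates of $g$ through one cyclically advancing window: query $x_j$ evaluates $g_i(x_j)$ for the positions $i$ in a contiguous cyclic block that starts where the previous query stopped and extends only while the examined bit of $g(x_j)$ matches the stored bit. Let $Q_i$ be the set of \emph{distinct} query elements on which $g_i$ is ever evaluated. The key lemma is: with probability $1-\eps/4$ over the choice of $g$ we have $\max_i|Q_i|=O(t/\ell)$, so every $g_i$ is evaluated on at most $n+O(t/\ell)\le k$ distinct points. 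To see why, note that querying an element of $S$, or repeating a query element, contributes nothing new to any $Q_i$, so the only growth of $\sum_i|Q_i|$ comes from fresh (element, position) pairs; and whenever a fresh bit $g_i(x)$ is examined, the window extends to the next coordinate only if that bit equals a value fixed before it is revealed, i.e.\ with probability $\tfrac12$. Hence the number of fresh evaluations is stochastically dominated by a sum of $O(t)$ independent geometric variables, so it is $O(t)$ except with tiny probability, and since the examined coordinates form cyclically consecutive blocks, this total spreads roughly evenly over the $\ell$ coordinates. The subtle point, which I expect to be the main obstacle, is making rigorous the claim ``a fresh bit is uniform given $A$'s view so far'' in the presence of adaptivity and the fact that the window sizes themselves depend on $g$; this is handled by revealing bits of $g$ lazily and aborting a branch the instant some $g_i$ would exceed its budget, so that the abort event is exactly the failure event of the lemma.

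Granting the key lemma, condition on the good event that every $g_i$ stays within budget and cuckoo succeeded. By the high-independence guarantee for $G$, each $g_i$ restricted to any $\le k$ adaptively chosen inputs is $1/k^c$-indistinguishable from a truly random function — a statement that is information-theoretic, so an unbounded $A$ poses no difficulty — and a hybrid over the $\ell$ coordinates replaces $g$ by a truly uniform $\tilde g\colon U\to\B^\ell$ at a cost of at most $\ell/k^c$ in $A$'s success probability. Under $\tilde g$, the element $x^*$ has never been fed to $\tilde g$, since it lies outside $S$ and outside the queries, so $\tilde g(x^*)$ is uniform in $\B^\ell$ and independent of $A$'s entire view; the lookup on $x^*$ answers `Yes' only if $\tilde g(x^*)$ equals one of the (at most two) fixed strings held in $T_1[h_1(x^*)]$ and $T_2[h_2(x^*)]$, an event of probability at most $2\cdot2^{-\ell}=2\eps^4$. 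Summing the slacks — $2\eps^4$ from the collision, $\ell/k^c$ from the hybrid, $\eps/4$ from a failure of the key lemma, $\eps/4$ from cuckoo failure — and using $\ell=4\leps$ so that $2\eps^4\le\eps/4$ for $\eps\le\tfrac12$ (the range $\eps>\tfrac12$ being near-trivial), we get $\Pr[\Challenge_{A,t}=1]\le\eps$, which is exactly the required resilience.
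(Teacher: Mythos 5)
Your proposal follows essentially the same route as the paper: bound the number of distinct points on which each coordinate function $g_i$ is evaluated via the cyclic bit-by-bit comparison, use the Pagh--Pagh/Berman \etal adaptive-indistinguishability guarantee (which is information-theoretic, so unbounded adversaries are fine) to swap $g$ for a truly random function coordinate-by-coordinate, and then observe that $g(x^*)$ is uniform and independent of the view so the collision probability is $2\cdot 2^{-\ell}$. Your version is marginally more careful in a few places the paper glosses over --- explicitly including the $n$ elements of $S$ in the per-coordinate budget (so $k=\Theta(n+t/\ell)$ rather than the paper's $k=2t/\leps$, which only covers the queries and relies on the side assumption $t\ge n\leps$), explicitly flagging the circularity between ``fresh bits look uniform'' and ``the budget holds'' and proposing lazy revelation to resolve it, and charging a separate term for Cuckoo-insertion failure --- but these are refinements of the same argument, not a different proof.
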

\begin{proof}
Let $A$ be any (unbounded) adversary that performs $t$ adaptive queries $x_1
\till x_t$ on $\BB$. The function $g$
constructed above outputs $\ell$ bits. For the analysis, recall that we
constructed
the function $g$ to be composed of $\ell$ independent functions, each $g_i$
with the range $V=\B$.
For the rest of the proof, we denote $g(x)$ to
be the composition of $g_1(x),\ldots, g_{\ell}(x)$.

In the lookup procedure, on each query $x_i$ we compute $g(x_i)$ and compare
it to a single cell in each table. Suppose that the comparison between
$g(x_i)$ and each cell is done bit by bit. That
is, on the first index where they differ we output '0' (\ie `no') and halt (and 
do not continue to the next bit). Only if
all the bits are equal we answer `1' (\ie `yes'). That is, not all functions 
$g_j$
necessarily participate on each query.

Moreover, suppose that for each cell
we {\em mark the last bit} that was compared. Then, on the next query, we 
continue
the comparison from the next bit in a cyclic order (the next bit of the last bit
is the first bit).

Let $Q=\{x_1 \till x_t\}$ be the set of elements that the adversary queries.
For any $j \in [\ell]$ let $Q_j \subset Q$ be the subset of queries that the
function $g_j$ participated in. That is, if $x_i \in Q_j$ then the function
$g_j$ participated in the comparisons of query $x_i$ (in either one of the
tables). For any set
$Q_j$, if $|Q_j| \le k$, then with high probability $g_j$ is $k$-wise
independent on the set $Q_j$ (in the distinguishing sense) and the
distribution of $g_j(Q_i)$ is uniform
in the view of $A$. Thus, we want to prove the following claim.
\begin{claim}
With probability at least $1-\eps/2$, for all $j \in [\ell]$ it holds that
$|Q_j| \le k$ and that no adversary can distinguish between the evaluation of
$g_j$ on $Q_j$ and the evaluation of a truly random function on $Q_j$,
where the probability is taken over the initialization of the Bloom filter.
\end{claim}
\begin{proof}
Any $(n,\eps)$-Bloom filter is resilient to a small number of queries. If $t$ 
is much less than $1/\eps$, then we do not expect the adversary to see any false 
positives, and hence we can consider the queries as chosen in advance. 
Therefore, we assume (without loss 
of generality) that $t > 1/\sqrt{\eps}$. Moreover, we can assume that $t \ge 
n\leps$, since a smaller $t$ does not reduce the memory use.

Each function $g_j$ is $k$-wise independent on any sequence of queries of
length $k$ with
probability at least $1-\frac{1}{k^c}$ (in the distinguishing sense). 
Therefore, the probability that any one of them is not $k$-wise 
independent (via a union bound) is at most $\ell/k^c = 4\leps/k^c \le \eps/4$
(for a large enough constant $c$).

Suppose  that all the functions are $k$-wise independent.
Since the comparisons are performed in a cyclic order for any $j\in[\ell]$ we 
always
have that
$$|Q_j| \le |Q_1| \le |Q_j|+1,$$
and thus it is enough to bound $|Q_1|$. Let $X_i$
be the number of functions that participated on query $x_i$ (in both tables)
and let $X=\sum_{i=1}^{t}X_i$. Since $g_j$ is $k$-wise
independent hash that outputs a single bit, for any $x \ne x'$ we
have that
$$\Pr[g_j(x')=g_j(x)]=\Pr[g_j(x')=0 \wedge g_j(x)=0] + \Pr[g_j(x')=1 \wedge 
g_j(x)=1] =1/4+1/4=1/2.$$
Therefore, during each comparison we
expect two functions to participate. Taking both tables into account, we get 
that $\E[X_i] \le 4$, and $\E[X] \le 4t$.
Using a Chernoff bound we get that
\begin{align*}
\Pr[X \ge 5t] \le e^{-\Omega(t^2)} \le e^{-\Omega((n\leps)^2)} \le \eps/4.
\end{align*}
On the other hand, since $|Q_1| \le |Q_j|+1$ we have that:
\begin{align*}
X = \sum_{j=1}^{\ell}|Q_j| \ge \ell(|Q_1|-1).
\end{align*}
Thus, if conditioned on having $X \le 5t$ we get that:
$$|Q_1| \le \frac{X}{\ell} + 1 \le \frac{5t}{\ell}+1 = \frac{5t}{4\leps}+1 \le 
\frac{2t}{\leps} = k.$$
Together, we get that the probability that for all $j \in [\ell]$
it holds that $|Q_j| \le k$ 
is at least $1-\eps/4-\eps/4=1-\eps/2$, which completes the proof of the claim.
\end{proof}

Assuming that each function $g_i(\cdot)$ is $k$-wise independent, and $|Q_i|
\le k$ we get that for any query $x$ the distribution of $g_j(x)$ is uniform.
Let $w_1,w_2$ be the contents of the cells that will be compared to with $x$.
The probability that $x$ is a false positive is at most the probability that
$g(x)=w_1$ or $g(x)=w_2$. Thus, we get that
\begin{align*}
\Pr[x \text{ is a false positive}] \le 2\Pr\left[\forall j \in [\ell] \colon
g_j(x)=w(j)\right] = 2 \cdot \paren{1/2}^{\ell} = 2 \cdot \paren{1/2}^{4\leps}
\le 2\eps^4 \le \eps/2.
\end{align*}

Going from exact $k$-wise independence to almost $k$-wise independence adds an 
error probability of $\eps/4$. Therefore, the
overall probability that $x$ is a false positive is at most
$\eps/2 + \eps/2=\eps$ and thus our construction is secure against $t$ queries. 
This completes the proof of the Theorem.
\end{proof}

Note that the time to evaluate the Bloom filter is $O(\leps)$. We can turn the
construction into an $O(1)$ evaluation one at the cost of using additional
$O(t\leps)$ bits instead.

\paragraph{Random Queries}
Our construction leaves a gap in the number of queries needed to attack
$(n,\eps)$-Bloom filters that use $m=O(n\leps)$ bits of memory. The attack
uses $O(m/\eps)$ queries while our construction is resilient to at most
$O(n\leps)$ queries. However, notice that the attack uses only random queries.
That is, it performs $t$ random queries to the Bloom filter and then decides
on $x^*$ accordingly. If we assume that the adversary works in this way, we
can actually show that our constructions is resilient to $O(m/\eps)$ queries,
with the same amount of memory.

\begin{lemma}
In the random query model, for any $n$, $\eps>0$ and $t = n/\eps$ there
exists an $(n,t,\eps)$-resilient Bloom filter that uses $O(n\leps)$ bits of
memory.
\end{lemma}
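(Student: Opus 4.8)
The plan is to show that the construction of \Cref{thm:construction}, taken with its internal resilience parameter set to $\Theta(n\leps)$ so that it uses $m=O(n\leps)$ bits, is in fact resilient to the much larger number $t=n/\eps$ of queries once these are \emph{random} --- that is, $x_1\till x_t$ are i.i.d.\ uniform over $U$, while $x^*$ is still chosen adversarially from the responses. The anchor of the argument is a fact already implicit in the analysis of \Cref{thm:construction}: were $g,h_1,h_2$ truly random, then after \emph{any} sequence of queries a fresh $x^*\notin S\cup\{x_1\till x_t\}$ would be a false positive with probability at most $2/|V|\le\eps/2$, because $g(x^*)$ and the two cells $T_1[h_1(x^*)]$ and $T_2[h_2(x^*)]$ it is tested against are jointly independent of $S$, of the queried elements, and of all the answers. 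The one thing that can spoil this is exhaustion of the bounded independence of the sub-functions $g_i$ that make up $g$.

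So the first step is to bound how much of the independence budget $t=n/\eps$ \emph{random} queries actually consume. Because a uniformly chosen $x\notin S$ cannot be aimed at a cell or a bit position probed earlier, its cyclic bit-by-bit comparison against either of its two cells runs for a geometrically distributed number of steps and the query is a false positive only with probability at most $\eps$; a Chernoff bound then gives that the number of false-positive queries is $O(t\eps)=O(n)$. I would then argue that it suffices for each $g_i$ to be $k$-wise independent for $k=\Theta(n)$ --- exactly what the $O(n\leps)$-bit budget buys --- because, for a fixed fresh $x^*$, a query that is \emph{not} a false positive interacts with $g_i$ only through the single bit ``does $g_i$ agree with this cell here'', which can be absorbed into the distinguishing slack of the family $G$ (in the sense of Berman \etalcite{BermanHKN13}) rather than counted against $k$, while the $O(n)$ false-positive queries, together with $S$ and $x^*$, form a set of size $O(n)\le k$ on which $g_i$ still looks random; hence $g_i(x^*)$ stays near-uniform given the transcript and the anchor fact applies.

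The step I expect to fight hardest with is the correlation coming from shared cuckoo cells: two query elements hashing to the same cell are tested against the same stored value, so their false-positive events are dependent, and among $t=n/\eps$ random queries there are of order $t^2/n=n/\eps^2$ colliding pairs. I would handle this by conditioning on a high-probability event --- cuckoo hashing succeeds on $S$, and neither a fixed fresh $x^*$ nor any query element shares a cell with more than a constant number of query elements in either table --- which holds because $h_1,h_2$ may be taken to have range $\Theta(n)$ and only logarithmic-wise independence, at a cost of $O(\log^2 n)$ bits that is absorbed by $O(n\leps)$. On that event the part of the transcript correlated with a fixed $x^*$ involves only $O(1)$ of the cells relevant to $x^*$, so the posterior probability that $g(x^*)$ matches one of its two cells is still $2/|V|$ up to a negligible additive term; with $|V|$ polynomially large in $1/\eps$, a union bound over the at most $t$ forbidden choices of $x^*$ (or over all of $U$ when $u$ is polynomial, as in \Cref{thm:construction}), and an extra $\eps/4$ charged for the passage from exact to almost $k$-wise independence exactly as there, yields the claimed bound $\eps$.
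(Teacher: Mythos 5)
Your proposal diverges from the paper's proof in a way that contains a genuine gap. The paper does \emph{not} keep the cyclic bit-comparison scheme of Theorem~\ref{thm:construction}; it switches to a fixed left-to-right comparison. This change matters, and your argument collapses without it. With cyclic comparison and $t=n/\eps$ random queries spread over $\ell=O(\leps)$ sub-functions, the expected load per $g_i$ is $t/\ell = n/(\eps\leps)$, which is vastly larger than the independence budget $k=\Theta(n)$ that $O(n\leps)$ bits can afford. Your proposed workaround --- that a query which is \emph{not} a false positive ``interacts with $g_i$ only through the single bit \ldots which can be absorbed into the distinguishing slack of the family $G$ \ldots rather than counted against $k$'' --- is not sound. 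The $g_i$ are one-bit-output functions; every distinct input on which $g_i$ is evaluated during a comparison is one query to $g_i$ in the sense of Berman et al., whether the bit matches the cell or not. The $k$-independence guarantee of the family $G$ is void once $|Q_i|>k$; it does not degrade gracefully by some ``slack'' that scales with the number of extra queries. (Relatedly, your back-of-envelope ``$O(t\eps)=O(n)$ false-positive queries'' counts only elements that pass \emph{all} bits, which is far fewer than the set whose load you would actually need to control.)

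The paper's fix is structural: compare bits always from position $1$ to position $\ell$, stopping at the first disagreement, and take $\ell=4\leps$. A random query then passes the first $2\leps$ positions with probability about $(1/2)^{2\leps}=\eps^2$, so among $t=n/\eps$ random queries only $X$ with $\E[X]\le n\eps\le n/2$ ever reach the second half; Chernoff gives $X<n=k$ with overwhelming probability. Thus $g_{2\leps+1},\ldots,g_{4\leps}$ each see at most $k$ distinct inputs and retain full $k$-wise independence. For the fresh $x^*$, the probability of matching those last $2\leps$ bits is $\eps^2<\eps$, independent of what happened to $g_1,\ldots,g_{2\leps}$ (whose budgets may well be exhausted, and whose behavior is irrelevant since $x^*$ must pass \emph{all} bits). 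This is the key lemma you are missing: you need some mechanism that \emph{protects} a block of $\Theta(\leps)$ sub-functions from the bulk of the $n/\eps$ queries, and the left-to-right truncated comparison is exactly that mechanism. Your observations about cuckoo-cell collisions and conditioning on low cell multiplicity are not needed once this is in place; the paper handles $x^*$'s comparison against its two cells simply by the remaining $k$-wise independence of the protected functions, since $x^*\notin S\cup Q$.
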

\begin{proof}
We use the same construction as in \Cref{thm:construction} where we set $\ell
= 2\leps$, $V=\B^{\ell}$, $k=n$ and $\alpha=0.1$. The number of bits required
to represent $g$ is $(1+\alpha)k\log{|V|} + O(k) = O(n\leps)$.

The analysis is also similar, however, in this case, we assume
that the comparisons are always done from the leftmost bit to rightmost one.
Let $X$ be a random variable denoting the number of queries among the $t=n/\eps$
random queries that pass the first $2\leps$ comparisons. The idea is to show
that $X$ will be smaller than $n$ with high probability, and thus the rest of
the $2\leps$ functions remain to act as random for the final query.

For any $j$, since the queries are random we have that $\Pr[g_j(x') \ne
g_j(x)]=1/2$. Thus, the probability that a single random query passes the
first $2\leps$ comparisons is $(1/2)^{2\leps} = \eps^2$. Thus $\E[X] \le
n/\eps
\cdot \eps^2=n\eps \le n/2$. Moreover, since the queries are independent this
expectation is concentrated and using a Chernoff bound we get that with
exponentially high probability $X < n$. For $1 \le j \le 2\leps$ we cannot
bound $|Q_j|$ and indeed it might hold that $|Q_j|$ is much larger than $n$.
However, for $2\leps < j \le 4\leps$ with high probability we have that $|Q_j|
< n$, and since each function is $n$-wise independent it acts as a random
function on $Q_j$. Therefore, for the last query $x^*$, the probability that
it passes the last $2\leps$ queries is at most $(1/2)^{2\leps} = \eps^2$.

Taking a union bound on the event that $g$ is $n$-wise independent and that $X
< n$ we get that $x^*$ will be a false positive with probability smaller than
$\eps$.
\end{proof}

\subsection{Open Problems}
An open problem this work suggests is proving tight bounds on the number of 
queries required to `attack' a Bloom filter. Consider the case where 
the memory of the Bloom filter is restricted to be $m=O(n\leps)$. Then, by 
\Cref{col} we have an upper bound of $O\paren{\frac{m}{\eps_0^2}}$ queries 
(actually $O\paren{\frac{m}{\eps_0}}$ for the steady case). In the random 
query model, the construction provided above (\Cref{thm:construction}) gives 
us an almost tight lower bound of $\Omega\paren{\frac{m}{\eps_0}}$ (which is 
tight in the steady case). However, for arbitrary queries, the lower bound is 
only $\Omega(m)= \Omega(n\leps)$.

Another issue regards the dynamic case (where the set is not given in 
advance). Our lower bounds hold for both cases and the construction 
against computationally bounded adversaries hold in the dynamic case as well. 
An open problem is to adjust the construction in the unbounded case to handle 
adaptive inserts while maintaining the same memory consumption (\eg 
\cite{FanAKM14}). The issue is that we use the original element to compute its 
location when moving in the cuckoo hash, where in the dynamic case only its 
hash is available.

\section*{Acknowledgments}
We thank Ilan Komargodski for many helpful discussions, and we thank Yongjun Zhao for helpful discussions about the proof of 
Claim \ref{clm1}. We thank the anonymous referees for many insightful comments.

\bibliographystyle{amsalpha}
\bibliography{Main-AdversarialBloomFilter}
\appendix
\section{Preliminaries}
We start with some general notation. We denote by $[n]$ the set of numbers
$\{1,2,\dots,n\}$. We denote by $\negl\colon\N\to\R^+$ an arbitrary function 
$f\colon\N\to\R^+$ such that for all $c$ we have that $f(n) < 1/n^c$ for 
sufficiently large $n$. Finally, throughout this paper, we denote by $\log$ the 
base 2 logarithm. We use some cryptographic primitives, as defined in 
\cite{Goldreich2001}.

\subsection{Definitions}

\begin{definition}[One-Way Functions]\label{def:onewayfunctions}
  \label{def:OWF}
  A function $f$ is said to be \textsf{one-way} if the following holds:
  \begin{enumerate}
  \item There exists a polynomial-time algorithm $A$ such that $A(x)=f(x)$ for
    every $x\in\B^*$.
  \item For every probabilistic polynomial-time algorithm $A'$ and all
    sufficiently large $n$,
    \begin{align*}
      \Pr_{}[A'(1^n,f(x))\in f^{-1}(f(x))] < \negl(n),
    \end{align*}
    where the probability is taken uniformly over $x\in\B^n$ and
    the internal randomness of $A'$.
  \end{enumerate}
\end{definition}

\begin{definition}[Weak One-Way Functions]\label{def:weakOWF}
  A function $f$ is said to be \textsf{weakly one-way} if the following holds:
  \begin{enumerate}
  \item There exists a polynomial-time algorithm $A$ such that $A(x)=f(x)$ for
    every $x\in\B^*$.
  \item There exists a polynomial $p$ such that for every probabilistic
    polynomial-time algorithm $A'$,
    \begin{align*}
      \Pr_{}[A'(1^n,f(x))\in f^{-1}(f(x))] < 1-\frac{1}{p(n)},
    \end{align*}
    where the probability is taken uniformly over $x\in\B^n$ and
    the internal randomness of $A'$.
  \end{enumerate}
\end{definition}

We now define almost one-way functions, functions that are only hard to invert for infinitely many input lengths (compared with standard one-way functions that are hard to invert for all but finitely many input lengths).
\begin{definition}[Almost One-Way Functions]\label{def:almostOWF}
	A function $f$ is said to be \textsf{almost one-way} if the following 
	holds:
	\begin{enumerate}
		\item There exists a polynomial-time algorithm $A$ such that $A(x)=f(x)$ for
		every $x\in\B^*$.
		\item There exists a polynomial $p$ such that for every probabilistic
		polynomial-time algorithm $A'$ and for infinitely many $n$'s,
		\begin{align*}
		\Pr_{}[A'(1^n,f(x))\in f^{-1}(f(x))] < \frac{1}{p(n)},
		\end{align*}
		where the probability is taken uniformly over $x\in\B^n$ and
		the internal randomness of $A'$.
	\end{enumerate}
\end{definition}

\begin{definition}[Statistical Distance]\label{def:statisticaldistance}
  Let $X$ and $Y$ be two random variables with range $U$. Then the 
  \textsf{statistical distance} between $X$ and $Y$ is defined as
  \begin{align*}
	\Delta(X,Y) \triangleq \max_{A \subset U}\paren{\Pr[X \in A] - \Pr[Y \in 
	A]}
  \end{align*}
\end{definition}

\begin{definition}[Pseudorandom Functions (PRF)]\label{def:pseudorandom}
Let $\ell_1:\N \to \N$ and $\ell_2:\N \to \N$ be efficiently computable 
functions bounded by a polynomial, denoting the length of the domain and the 
range respectively. An efficiently computable family of functions
\begin{align*}
\mathcal{PRF} = \left \{\PRF_K:\B^{\ell_1(\lambda)} \to \B^{\ell_2(\lambda)} : 
K \in 
\B^{\lambda},\lambda \in \N \right \}
\end{align*}
is called a pseudorandom function if for every probabilistic polynomial time 
algorithm $A$ there exists a negligible function $\negl(\cdot)$ such that 
\begin{align*}
\left |\Pr_{K \gets \B^{\lambda}}[A^{\PRF_K(\cdot)}(1^\lambda)=1] - 
\Pr_{f \gets F_\lambda}[A^{f(\cdot)}(1^\lambda) = 1] 
\right | \le \negl(\lambda)
\end{align*}
where $F_\lambda$ is the set of all functions that map $\B^{\ell_1(\lambda)}$ 
into $\B^{\ell_2(\lambda)}$.
\end{definition}

\begin{definition}[Pseudorandom Permutation (PRP)]\label{def:pseudorandomP}
Let $\ell:\N \to \N$ be an efficiently computable 
function bounded by a polynomial, denoting the length of the domain. An 
efficiently computable family of {\em permutations}
\begin{align*}
\mathcal{PRP} = \left \{\PRP_K:\B^{\ell(\lambda)} \to \B^{\ell(\lambda)} : 
K \in 
\B^{\lambda},\lambda \in \N \right \}
\end{align*}
is called a pseudorandom permutation if for every probabilistic polynomial time 
algorithm $A$ there exists a negligible function $\negl(\cdot)$ such that 
\begin{align*}
\left |\Pr_{K \gets \B^{\lambda}}[A^{\PRP_K(\cdot)}(1^\lambda)=1] - 
\Pr_{p \gets P_\lambda}[A^{p(\cdot)}(1^\lambda) = 1] 
\right | \le \negl(\lambda)
\end{align*}
where $P_\lambda$ is the set of all permutations over $\B^{\ell(\lambda)}$.
\end{definition}

\begin{definition}[Universal Hash Family]\label{def:universal}
A family of functions $\HH \subset \{h \mid h:U \to [m]\}$ is called universal 
if for any $x_1,x_2 \in U$, $x_1 \ne x_2$: 
$$\Pr_{h \in \HH}[h(x_1) = h(x_2)] \le \frac{1}{m}$$
\end{definition}

\begin{lemma}\label{clm:logbinom}
For any $n,k \in \N$ it holds that $\logchoose{n}{k} \le k\log{n} - k\log{k} 
+ 2k$.
\end{lemma}
\begin{proof}
\begin{align*}
\logchoose{n}{k} \le \logp{\frac{n^k}{k!}} \le k\log{n} - \logp{k!} \le 
k\log{n} - k\log{k} + k\log{e} \le k\log{n} - k\log{k} + 2k.
\end{align*}
\end{proof}
\section{Implementing an Adversarial Resilient Bloom Filter}
\label{sec:implementation}
The original proposal of Bloom \cite{Bloom70} suggests a simple analysis 
and implementation. In practice, Bloom filters are used when performance is 
crucial, and extremely fast implementations are necessary. Thus, study of 
better implementation has attracted the community to further study this object. 
In particular, it has been shown that dictionary based implementations 
\cite{CarterFGMW78} outperform Bloom's construction not only by theoretical 
analysis 
but also in practical implementations (some examples of such implementations 
are \cite{BenderFJKKMMSSZ12,FanAKM14,PaghPR05,PutzeSS09}). We have defined and 
constructed an adversarial resilient Bloom filter under the 
assumption that pseudorandom functions exist. However, implementing our 
construction raises many difficulties: One needs to find 
implementations of the hash functions and the pseudorandom function that are 
fast on one hand, but are secure on the other.

We overcome these difficulties and provide an implementation of an adversarial 
resilient Bloom filter that is provably secure under the hardness of AES, and 
is essentially as fast as any other implementation of insecure Bloom filters. 
Our 
implementation uses AES-NI (Advanced Encryption Standard Instruction Set) that 
are embedded in most modern CPUs and provide hardware acceleration of the AES 
encryption and decryption algorithms. We use these instructions to implement 
our hashing functions and pseudorandom function. One advantage of 
blockcipher-based hashing is that they have been well studied and have 
several security proofs in the ideal cipher model.

The AES is used mainly to implement the secure pseudorandom function. However, 
to boost performance even more, we use the same function to implement the hash 
functions in the Bloom filter implementation as well. In particular, we used a 
linear probing based implementation. Since the AES results in a 128-bit block 
that is indistinguishable from a random block, we can split it 
into several parts and use each part for a different hash function. This 128-bit 
block suffices for about any realistic setting: supporting Bloom filters with 
more than $2^{80}$ elements and an error probability which can be as small as 
$2^{-40}$ simultaneously.

We have compared our implementation (in C\#) with several hash 
functions commonly used in Bloom filter implementations. We measured the time 
to insert $n$ elements, and then perform $n$ queries where half are in the set 
and half outside, for $n=10^5,10^6,10^7$, error $\varepsilon=2^{-16}$, and 
where the load of the dictionary was $\alpha=0.8$. The 
results are outlined in Figure \ref{fig:exp}. Notice that our hash function is 
only about 20-40 percent slower than the fastest one, for any choice of 
$n$, where the fastest functions have no security claims whatsoever. We believe 
that with better use of the AES-NI instruction (with a low-level language as 
C or C++) we could make the differences even smaller.

\begin{figure}[H]
\begin{center}
    \begin{tabular}{ | l | c | c | c |}
    \hline
    Hash Function & 100,000 & 1,000,000 & 10,000,000 \\ \hline
    MurmurHash3~\cite{applebymurmurhash3}	& 3.95ms&4.13ms&4.05ms \\ \hline
    CityHash~\cite{cityhash} 				&3.3ms&3.11ms&3.56ms \\ \hline
    Jenkins' Lookup3~\cite{Jenkinslookup3} 	&4.6ms&2.97ms&3.37ms\\ \hline
    BLAKE2~\cite{AumassonNWW13} 		& 49.3ms&44.4ms&44.4ms\\ \hline
    Pearson hashing~\cite{pearson1990fast} 	& 5.45ms&5.24ms&5.83ms \\ \hline
    This Work								& 4.65ms&4.54ms&4.86ms\\ \hline
    \end{tabular}

\caption{The performance time of the Bloom filter using different hash 
functions in milliseconds per $10^5$ operations (either insert or query) on a 
Intel core i7-2600K 3.40GHz with 8GB of memory.}
\label{fig:exp}
\end{center}
\end{figure}

One other advantage that only our construction enjoys is the ability to measure 
its performance on worst-case inputs. In general, the measurements in lab are 
performed on random inputs which can speed up the Bloom filter's query time. 
However, in practice, the inputs might be arbitrary and we might get worse 
results than measured in the lab. This is not the case with our implementation. 
Since we randomize the inputs using our implementation of a pseudorandom 
function, the performance on random inputs is indistinguishable from the 
performance on the worst possible sequence of inputs. Thus, if computing AES 
takes the same time for any element, we get that the performance in any real 
setting will be exactly as the performance measured in our experiments in the 
lab.

\end{document}